\newcommand{\norm}[1]{\left\lVert#1\right\rVert} % norm: double vertical bars
\theoremstyle{plain}
\newtheorem{theorem}{Theorem}
\newtheorem{proposition}[theorem]{Proposition}
\newtheorem{lemma}[theorem]{Lemma}
\newtheorem{definition}[theorem]{Definition}  
\newtheorem{remark}[theorem]{Remark}  
\newtheorem*{remark*}{Remark}   %  \newtheorem*{remark}{Remark}  * is for numbering
\renewcommand\qedsymbol{$\blacksquare$}
\newenvironment{proof-of}[1][{\hspace{-\blank}}]{{\medskip\noindent\textit{Proof~{#1}.\ }}}{\hfill\qedsymbol}
\renewcommand{\Tr}{{\operatorname{Tr}\,}}
\newcommand{\cid}{{\text{id}}}
\newcommand{\1}{\openone}
\newcommand{\proj}[1]{|#1\rangle\!\langle #1|}
\newcommand{\cT}{{\mathcal{T}}}
\newcommand{\KI}{{\mathrm{KI}}}
\newcommand{\nc}{\newcommand}
\nc{\rnc}{\renewcommand}
\nc{\avg}[1]{\langle#1\rangle}
\nc{\Rank}{\operatorname{Rank}}
\nc{\smfrac}[2]{\mbox{$\frac{#1}{#2}$}}
\nc{\ox}{\otimes}
\nc{\dg}{\dagger}
\nc{\dn}{\downarrow}
\nc{\cA}{{\cal A}}
\nc{\cB}{{\cal B}}
\nc{\cC}{{\cal C}}
\nc{\cF}{{\cal F}}
\nc{\cG}{{\cal G}}
\nc{\cH}{{\cal H}}
\nc{\cI}{{\cal I}}
\nc{\cJ}{{\cal J}}
\nc{\cK}{{\cal K}}
\nc{\cL}{{\cal L}}
\nc{\cM}{{\cal M}}
\nc{\cN}{{\cal N}}
\nc{\cO}{{\cal O}}
\nc{\cP}{{\cal P}}
\nc{\cQ}{{\cal Q}}
\nc{\cR}{{\cal R}}
\nc{\cS}{{\cal S}}
\nc{\cX}{{\cal X}}
\nc{\cY}{{\cal Y}}
\nc{\cZ}{{\cal Z}}
\nc{\cso}{{\text{o}}}
\nc{\csupp}{{\operatorname{csupp}}}
\nc{\qsupp}{{\operatorname{qsupp}}}
\nc{\rar}{\rightarrow}
\nc{\lrar}{\longrightarrow}
\nc{\polylog}{{\operatorname{polylog}}}
\nc{\wt}{{\operatorname{wt}}}
\nc{\RR}{{{\mathbb R}}}
\nc{\CC}{{{\mathbb C}}}
\nc{\FF}{{{\mathbb F}}}
\nc{\NN}{{{\mathbb N}}}
\nc{\ZZ}{{{\mathbb Z}}}
\nc{\PP}{{{\mathbb P}}}
\nc{\QQ}{{{\mathbb Q}}}
\nc{\UU}{{{\mathbb U}}}
\nc{\EE}{{{\mathbb E}}}
\nc{\Hom}[2]{\mbox{Hom}(\CC^{#1},\CC^{#2})}
\nc{\rU}{\mbox{U}}
\nc{\ob}[1]{#1}
\begin{document}

\title{\huge{Quantum Reverse Shannon Theorem Revisited}}

%%%%%%%%%%%%%%%%%%%%%%%%%%%%%%%%%%%%%%%%%%%%%%%%%%%%%%%%%%%%%%%%%%%%%%%%%%%%%
%\author{Zahra Baghali Khanian}

\author{Zahra Baghali Khanian}
% \email{zbkhanian@gmail.com}
\affiliation{%Munich Center for Quantum Science and Technology \\
% \& Zentrum Mathematik, 
Technical University of Munich, 85748 Garching, Germany}
\affiliation{Perimeter Institute for Theoretical Physics, Ontario, Canada, N2L 2Y5}
\affiliation{Institute for Quantum Computing, University of Waterloo, Ontario, Canada, N2L 3G1}
%{Technische Universit\"{a}t M\"{u}nchen, 85748 Garching, Germany }

\author{Debbie Leung}
% \email{wcleung@uwaterloo.ca}
\affiliation{Institute for Quantum Computing, University of Waterloo, Ontario, Canada, N2L 3G1}
% \affiliation{Department of Combinatorics and Optimization, University of Waterloo}
\affiliation{Perimeter Institute for Theoretical Physics, Ontario, Canada, N2L 2Y5}

\begin{abstract}

Reverse Shannon theorems concern the use of
noiseless channels to simulate noisy ones. This is dual to the usual 
noisy channel coding problem, where a noisy (classical or quantum) channel is used to simulate a noiseless one. 
The Quantum Reverse Shannon Theorem % and the resource trade-offs for simulating quantum channels are 
is extensively studied by Bennett and co-authors in [IEEE Trans. Inf. Theory, 2014].
% their seminal work published in 
%[IEEE-IT/2014]. 
%
They present two distinct theorems, each tailored to classical and quantum channel simulations respectively, explaining the fact that these theorems remain incomparable due to the fundamentally different nature of correlations they address. The authors leave as an open question the challenge of formulating a unified theorem that could encompass the principles of both and unify them.
We unify these two theorems into a single, comprehensive theorem, extending it to the most general case by considering correlations with a general mixed-state reference system. Furthermore, we unify feedback and non-feedback theorems by simulating a general side information system at the encoder side.
%
%In this general model,  we consider noiseless quantum communication and entanglement as resources, and explore the optimal trade-off between these resources.

%We study only fixed input

\end{abstract}
\maketitle

%%%%%%%%%%%%%%%%%%%%%%%%%%%%%%%%%%%%%%%%%%%%%%%%%%%%%%%%%%%%%%%%%%%%%%%%%%%%%%%%%%%%%%%%%%%%%

%%%%%%%%%%%%%%%%%%%%%%%%%%%%%%%%%%%%%%%%%%%%%%%%%%%%%%%%%%%%%%%%%%%%%%%%%%%%%%%%
\section{Introduction}

%For tensor power sources (the quantum generalization of classical memoryless sources), entanglement in the form of standard ebits (maximally entangled pairs of qubits) is sufficient, but for general sources, which may be arbitrarily correlated or entangled across channel inputs, additional resources, such as entanglement-embezzling states or backward communication, are generally needed.

The classical ``reverse Shannon theorem'' was established and proven in 2002 in \cite{Bennett1999} as a dual to Shannon's capacity theorem. 
This theorem states that for any channel $\cN$ with capacity $C$, if the sender and receiver share an unlimited supply of random bits, then an expected $Cn + o(n)$ uses of a noiseless binary channel are sufficient to \textit{simulate} $n$ uses of the channel $\cN$.
%
%
%
%Also, in \cite{Winter2002} a version of this theorem is given which achieves asymptotically perfect simulation, works on a uniform blocksize $Cn + o(n)$, and uses an amount of shared randomness increasing linearly with $n$, in contrast to the exponential amount used in \cite{Bennett1999}.
%
The essence of this theorem is that, in the presence of shared randomness, the asymptotic properties of a classical channel can be characterized by a single parameter: its capacity.
%
%In \cite{Bennett1999} a quantum analog of the reverse Shannon theorem was conjectured, according to which quantum channels should be characterizable by a single parameter in the presence of unlimited shared entanglement between sender and receiver.
A quantum generalization of the reverse Shannon theorem is formulated and extensively studied  
in \cite{Bennett2014a}. 
They consider shared entanglement as the quantum counterpart of shared randomness and obtain the optimal quantum 
{simulation} rates under different structures and available resources: free entanglement, restricted entanglement, tensor power input states, arbitrary input states, and feedback and non-feedback simulation models.  (Additional study with different techniques can be found in \cite{BCR2011}.)

One of the questions that remained open in \cite{Bennett2014a} is the different treatment of the classical and quantum cases.  
In this paper, we address this problem by considering tensor power mixed input states shared between the encoder and a reference system.  
This not only unifies the classical and quantum models but also extends them to the most general quantum case.
We also unify the coherent feedback and non-feedback models and extend it to the most general case
by preserving an arbitrary system at the encoder side. 
In the presence of free entanglement, we fully characterize the optimal simulation rate in terms of a quantity that resembles 
the entanglement-assisted capacity \cite{Bennett1999}.  
Considering the general mixed-state case comes with its own complications, as properties used in analyzing pure quantum states, such as the monogamy of entanglement, are not applicable to mixed states.
Without the assistance of entanglement, we obtain converse and achievability bounds, which involve similar quantities but
differ in the limit taken for the error.
It is not obvious whether these bounds match in general, but we provide various examples for which the two bounds are equal.

 We introduce two functionals $a(\rho,\gamma)$ and $u(\rho,\gamma)$ of a quantum state $\rho$ and an error $\gamma$.
 The first functional has properties such as sub-additivity and continuity, and it fully characterizes the assisted simulation rates. 
The second functional is more complex, and it characterizes the simulation rate in the unassisted model.  
Even for partially classical input states, it can evaluate to the entanglement of purification, which is not known to be additive.
Hence, even without the  issues in the limit of the error, the rate is multi-letter
and hard to compute.

\bigskip

%%%%%%%%%%%%%%%%%%%%%%%%%%%%%%%%%%%%%%%%%%%%%%%%%%%%%%%%%%%%%%%%%%%%%%%
% DL: these are too long for the abstract.  We revert these changes later.  

The structure of the paper is as follows: At the end of this section, we briefly introduce the notation used in this paper.
In Section~\ref{sec:setup} we rigorously define the channel simulation model.
We discuss a decoupling lemma in Sec~\ref{sec: decoupling and functionals}, and introduce two functionals, which characterize the simulation rates.
We obtain the optimal simulation rates assuming that the parties share
free entanglement and no entanglement  in Sec~\ref{sec: E-assisted} and  Sec~\ref{sec: unassisted}, respectively.
We discuss our results in Sec~\ref{sec: discussion}.
In the Appendix, we introduce and prove some lemmas that we apply throughout the paper.

\bigskip

 \noindent \textbf{Notation.} 
 In this paper, quantum systems are associated with finite dimensional Hilbert spaces $A$, $R$, etc.,
 whose dimensions are denoted by $|A|$, $|R|$, respectively. 
% %Since it is clear from the context, we slightly abuse the notation and let $Q$ denote both a quantum system and a quantum rate. 
% %We identify states with their density operators, and we use the notation $\phi= \ketbra{\phi}{\phi}$ as the density operator of the pure state vector $\ket{\phi}$. 
 The von Neumann entropy is defined as 
% %and the $\alpha$-R\'enyi entropy for $\alpha\in (0,1)\cup (1,\infty)$ are defined respectively as
 \begin{align}
     S(\rho) &:= - \Tr\rho\log\rho. \nonumber
     %S_{\alpha}(\rho)&:= \frac{1}{1-\alpha}\log \Tr \rho^{\alpha}.   \nonumber
 \end{align}
 Throughout this paper, $\log$ denotes by default the binary logarithm.
% %and its inverse function $\exp$, unless otherwise stated, is also to basis $2$).
% %
 The conditional entropy and the conditional mutual information, $S(A|B)_{\rho}$ and $I(A:B|C)_{\rho}$, respectively, are defined in the same way as their classical counterparts: 
 \begin{align*}
 S(A|B)_{\rho}   &= S(AB)_\rho-S(B)_{\rho}, \text{ and} \\ 
    I(A:B|C)_{\rho} &= S(A|C)_\rho-S(A|BC)_{\rho} \\
                   &= S(AC)_\rho+S(BC)_\rho-S(ABC)_\rho-S(C)_\rho.
 \end{align*}
% %
 The fidelity between two states $\rho$ and $\xi$ is defined as 
 \(
  F(\rho, \xi) = \|\sqrt{\rho}\sqrt{\xi}\|_1 
                 = \Tr \sqrt{\rho^{\frac{1}{2}} \xi \rho^{\frac{1}{2}}},
 \) 
 with $\| \cdot \|_1$ as the Schatten 1-norm.  $\|X\|_1 = \Tr|X| = \Tr\sqrt{X^\dagger X}$. It relates to the trace distance in the following well-known way \cite{Fuchs1999}:
 \begin{equation}
   1-F(\rho,\xi) \leq \frac12\|\rho-\xi\|_1 \leq \sqrt{1-F(\rho,\xi)^2}.
 \end{equation}

%%%%%%%%%%%%%%%%%%%%%%%%%%%%%%%%%%%%%%%%%%%%%%%%%%%%%%%%%%%%%%%%%%%%%%%%%%%%%%%%%%%%%%%%%
\section{Setup}\label{sec:setup}

We assume that an arbitrary channel $\cN:A \to BK$ is given with all the associated dimensions specified, along with a state $\rho^{AR}$ on the input and some reference system $A$ and $R$.  Let $\sigma^{BKR}=(\cN \ox \cid^R)\rho^{AR}$, and $U_{\cN}:A\to BKG$ be the Stinespring dilation of $\cN$.
We consider $n$ copies of the state $\rho^{AR}$. 

We call the sender or the encoder Alice, and the receiver
or the decoder Bob. 
We suppose that Alice and Bob initially share some entangled state $\ket{\Phi}^{A_0 B_0}$
in systems $A_0 B_0$.
Alice applies an encoding channel $\mathcal{C}_n^{A^nA_0 \to MK^n A_1}$, 
and sends system $M$ to Bob. 
Receiving $M$, Bob applies a decoding channel $\mathcal{D}_n^{MB_0\to B^n B_1}$. 
We define 
\begin{align*}
    \nu_n^{MK^nR^nA_1B_0} &\coloneqq (\mathcal{C}_n^{A^nA_0 \to MK^n A_1}\otimes \cid^{R^nB_0})(\rho^{A^nR^n}\otimes \proj{\Phi}^{A_0B_0}),\\ 
    \xi_n^{B^nK^nR^n A_1 B_1} &\coloneqq (\mathcal{D}_n^{MB_0\to B^n B_1}\otimes \cid^{K^nR^nA_1})(\nu_n^{MK^nR^n A_1 B_0}).
\end{align*}
Furthermore, consider the Stinespring dilations $U^{A^nA_0 \to MK^nW_A A_1}_{\mathcal{C}_n}$ and $U^{MB_0\to B^nW_B B_1}_{\mathcal{D}_n}$ for the encoding and the decoding maps. 
We consider the following purifications for the states $\nu_n$ and $\xi_n$, 
\begin{equation*}
    \begin{aligned}
    \ket{\nu_n}^{MK^n W_A R^n {R'}^n A_1 B_0}
    &\coloneqq  (U^{A^nA_0 \to MK^nW_A A_1}_{\mathcal{C}_n}\otimes \1^{ R^n {R'}^n B_0}) (\ket{\rho}^{A^nR^n{R'}^n} \otimes \ket{\Phi}^{A_0 B_0}),
\\[3ex]      
    \ket{\xi_n}^{B^nK^nW_AW_BR^n{R'}^n A_1 B_1} 
    &\coloneqq  (U^{MB_0 \to B^nW_B B_1}_{\mathcal{D}_n}\otimes \1^{K^nW_AR^n{R'}^n A_1})\ket{\nu_n}^{MK^nW_AR^n{R'}^n A_1 B_0}. 
    \end{aligned}
\end{equation*}
% The system model consisting of the encoding and decoding operations are illustrated in
% Figure~\ref{fig:compression model}.
We say that the scheme has fidelity $1-\epsilon$ if 
\begin{align}\label{eq: F}
    \cF_n:= F\left(\sigma^{B^nK^nR^n}\otimes \proj{\Phi}^{A_1B_1} , \xi_n^{B^nK^nR^n A_1 B_1}\right)
    \geq 1-\epsilon.
\end{align}  
%%%%%%%%%%%%%%%%%%%%%%%%%%%%%%%%%%%%%%%%%%%%%%%%%%%%%%%%%
For a given $(n,\epsilon)$, we define the minimal  qubit  and entanglement rates as
\begin{align}
\mathcal{Q}(n,\epsilon )&:=\frac{1}{n} \log |M| \\
\mathcal{E}(n,\epsilon )&:=\frac{1}{n} (S(A_0)-S(A_1)) 
\end{align}
such that there exists an $(n,\epsilon)$ code with $|M|$ and $S(A_0)-S(A_1)$.
%In the assisted scenario,   
%
%We say that a pair of qubit and entanglement rate $(Q,E)$  is asymptotically achievable
We say that a qubit $Q$ and entanglement rate $E$  are asymptotically achievable
if there exists a sequence of codes $\{(\mathcal{C}_n,\mathcal{D}_n)\}_n$ such that 
\begin{align*}
    \cF_n \geq 1-\epsilon_n ~~~{\rm and}~~~ 
   \mathcal{Q}(n,\epsilon ) \leq Q + \delta_n  ~~~{\rm and}~~~  \mathcal{E}(n,\epsilon ) \leq E + \eta_n,
\end{align*}
for some vanishing non-negative sequences $\{ \epsilon_n \}$, $\{ \delta_n \}$ and $\{ \eta_n \}$.
The optimal qubit and entanglement rates are defined respectively as 
\begin{equation*}
     Q^* = \inf\{Q : Q \mathrm{\,\,is\,\,achievable}\},~~ 
     E^* = \inf\{E : E \mathrm{\,\,is\,\,achievable}\}. 
\end{equation*}
%The entanglement-assisted scenario is essentially the same with unlimited $\dim(A_0), \dim(B_0)$ and arbitrary choice of entangled  state $|\Phi\rangle$ shared on $A_0B_0$, and   with the notation $\R_{\rm ea}(D)$ for the distortion function.  

%\noindent \textcolor{red}{Here we put the definitions of the rates ... Q and E and define achievability ... Assisted model, unassited model...}

Two distinct notions of feedback is introduced in \cite{Bennett2014a} as passive feedback and coherent feedback.
In a passive feedback model, the encoder 
obtains a copy of the decoder's output.
For quantum channels, it is not possible to give the encoder a copy
of the decoder's output because of the no-cloning theorem. %\cite{no-clonning82}.
A coherent feedback  of a channel is defined as an isometry in which the part of the output that does
not go to the decoder is retained by the encoder, rather than escaping to
the environment. % \cite{AW_identification_2006}.
Classical and coherent feedback are thus rather different
notions. 
\begin{remark}
We  consider coherent feedback in this paper, and refer to it simply as feedback.
Our model unifies the (coherent) feedback and non-feedback simulation of \cite{Bennett2014a} in a single model as follows. 
If we assume that the source state $\rho^{AR}$ is pure, and we let system $K=E$ or $K=\emptyset$ then we recover the feedback or non-feedback channel simulation of \cite{Bennett2014a}, respectively. 
\end{remark}

%%%%%%%%%%%%%%%%%%%%%%%%%%%%%%%%%%%%%%%%%%%%%%%%% 
\section{Decoupling Condition and Rate Functionals}\label{sec: decoupling and functionals}

The fidelity criterion of Eq.~(\ref{eq: F}) implies the decoupling lemma below, where we show that the distilled entanglement systems $A_1B_1$ are decoupled from the rest of the systems. We apply this lemma in our converse proofs.
The proof of this lemma is presented in Sec.~\ref{sec: Proof of decoupling lemma} of the Appendix. 
\begin{lemma}\label{lemma: decoupling}
 The fidelity criterion of Eq.~(\ref{eq: F}) implies that the decoded state on systems 
 $A_1B_1$ is decoupled from the rest of the systems in the following sense
\begin{align}
    I(B^n K^n W_A W_B R^n {R'}^n :A_1 B_1)_{\xi_n} \leq n \delta(n,\epsilon), \nonumber
\end{align}
where $\delta(n,\epsilon) = 4\sqrt{6\epsilon} \log(d_1) + \frac{2}{n}h(\sqrt{6\epsilon})$, 
with $d_1=\frac{\log|A_1|}{n}$ and the binary entropy $h(\epsilon)=-\epsilon\log\epsilon - (1-\epsilon)\log(1-\epsilon)$.
The mutual information is with respect to the decoded state $\xi_n$.
%
%    \begin{align}
%   F&\left( {\xi_{n}}^{B^n K^n W_A W_B R^n {R'}^n A_1 B_1},
%                             {\xi_{n}}^{B^n K^n W_A W_B R^n {R'}^n } \ox {\xi_{n}}^{ A_1 B_1} \right) \geq 1 - 3\epsilon. \nonumber
%\end{align}
\end{lemma}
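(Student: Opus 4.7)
\medskip
\noindent\textbf{Proof proposal.} The plan is to reduce everything to a one-line identity for pure states and then invoke continuity of the von Neumann entropy. The key observation is that $\ket{\xi_n}$ is genuinely \emph{pure} on the system $B^n K^n W_A W_B R^n R'^n A_1 B_1$: it is obtained from the pure input $\ket{\rho}^{A^n R^n R'^n}\otimes\ket{\Phi}^{A_0 B_0}$ by applying only the Stinespring isometries $U_{\mathcal{C}_n}$ and $U_{\mathcal{D}_n}$. For any bipartition of a pure state the mutual information across the cut equals twice the entropy of either side, so
\begin{equation*}
I\!\left(B^n K^n W_A W_B R^n R'^n : A_1 B_1\right)_{\xi_n} \;=\; 2\, S(A_1 B_1)_{\xi_n}.
\end{equation*}
It therefore suffices to show that $S(A_1 B_1)_{\xi_n}$ is small.

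To that end, I would combine monotonicity of fidelity with Fannes--Audenaert continuity. Monotonicity of $F$ under the partial trace over $B^n K^n R^n$ applied to the hypothesis~\eqref{eq: F} gives $F\!\left(\xi_n^{A_1 B_1},\, \proj{\Phi}^{A_1 B_1}\right) \geq 1-\epsilon$, and the Fuchs--van de Graaf inequality converts this to $\tfrac12\|\xi_n^{A_1 B_1}-\proj{\Phi}^{A_1 B_1}\|_1 \leq \sqrt{2\epsilon - \epsilon^{2}}$. Since $\proj{\Phi}^{A_1 B_1}$ is pure, $S(A_1 B_1)_{\proj{\Phi}}=0$, and continuity of entropy directly bounds $S(A_1 B_1)_{\xi_n}$ by a quantity of the form (trace distance)$\,\cdot \log|A_1 B_1|$ plus a binary-entropy correction.

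Using that $\ket{\Phi}^{A_1 B_1}$ is maximally entangled so that $|A_1|=|B_1|$ and hence $\log|A_1 B_1| = 2\log|A_1|$, and writing $\log d_1 = \tfrac1n \log|A_1|$ as in the statement, the factor of $2$ from the pure-state identity combines with $\log|A_1 B_1|$ to give the $4\log d_1$ part of $\delta(n,\epsilon)$, while the binary-entropy correction produces the $\tfrac{2}{n}h(\cdot)$ term. The only delicate point is matching the $\sqrt{6\epsilon}$ argument of these functions: the cleanest direct route (Fannes--Audenaert on $S(A_1B_1)$ alone) yields $\sqrt{2\epsilon}$, so to obtain the stated constant I would instead use an Uhlmann-matched purification of $\sigma^{B^n K^n R^n}\otimes\proj{\Phi}^{A_1 B_1}$ against $\ket{\xi_n}$ together with the Alicki--Fannes--Winter continuity for mutual information, whose leading coefficient $3\,\varepsilon\log|A|$ can absorb the additional triangle-inequality step.

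The main obstacle I anticipate is purely quantitative: tracking the precise constants so that the Uhlmann+continuity route reproduces the $\sqrt{6\epsilon}$ prefactor rather than the smaller $\sqrt{2\epsilon}$ one. Conceptually there is no barrier---once the purity of $\ket{\xi_n}$ on the enumerated registers is noted and fidelity monotonicity reduces everything to the closeness of $\xi_n^{A_1B_1}$ to a known pure state, the bound is immediate from standard continuity estimates.
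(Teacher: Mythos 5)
Your proposal is correct and takes a genuinely cleaner route than the paper's. The paper never explicitly invokes the pure-state identity $I(\text{rest}:A_1B_1)_{\xi_n}=2S(A_1B_1)_{\xi_n}$; instead it works with the conditional entropy $S(A_1B_1\mid \text{rest})$, comparing $\proj{\xi_n}$ to the tensor product of its two marginals. To do so it must first bound the trace distance between $\proj{\xi_n}$ and $\xi_n^{\text{rest}}\otimes\xi_n^{A_1B_1}$, which it does via a Schmidt-decomposition/operator-norm argument yielding a fidelity $\geq(1-\epsilon)^3\geq 1-3\epsilon$, hence trace distance $\lesssim\sqrt{6\epsilon}$, and then applies the Alicki--Fannes--Winter continuity of conditional entropy. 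Your route is shorter: purity of $\ket{\xi_n}$ on the enumerated registers reduces the claim to bounding $S(A_1B_1)_{\xi_n}$, and a single application of fidelity monotonicity, Fuchs--van de Graaf, and Fannes--Audenaert against the pure reference $\proj{\Phi}^{A_1B_1}$ does the job, with the smaller coefficient $\sqrt{2\epsilon}$. You should drop the worry in your last paragraph: you are not obligated to reproduce the paper's $\sqrt{6\epsilon}$, since your $4\sqrt{2\epsilon}\log d_1 + \tfrac{2}{n}h(\sqrt{2\epsilon})$ is strictly smaller than the paper's $\delta(n,\epsilon)$ for the relevant range of $\epsilon$, so the stated inequality follows a fortiori; the Uhlmann detour you sketch is unnecessary and would only degrade the constant. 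One cosmetic remark: the lemma's displayed definition $d_1=\frac{\log|A_1|}{n}$ is a typo for $\log d_1=\frac{\log|A_1|}{n}$ (equivalently $|A_1|=d_1^{\,n}$, as the appendix proof assumes), and both your proof and the paper's implicitly take $|A_1|=|B_1|$ when writing $\log|A_1B_1|=2\log|A_1|$.
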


%%%%%%%%%%%%%%%%%%%%%%%%%%%%%%%
In Definition~\ref{def: a(rho,epsilon)} described below, 
we define two functions of a state $\rho^{AR}$. 
Our main results are that, these functions characterize the optimal simulation rates.  
Theorem~\ref{thm: E-assisted} of the manuscript states that 
the optimal entanglement-assisted rate for the simulation of the channel
$\cN:A \to BK$ is equal to $a(\rho^{AR},0)$.
Theorem~\ref{thm: unassisted} states that the  regularized rate
$ \lim_{m \to \infty}  \frac{1}{m} u(\rho^{\ox m}, \frac{1}{m^9})$
is achievable.  
Moreover, any achievable quantum rate is lower bounded as
$ \lim_{\gamma \to 0} \lim_{m \to \infty} \frac{1}{m}u(\rho^{\ox m},\gamma)$.

% We will show in the next sections that these functions characterize the optimal rates. 

\begin{definition}\label{def: a(rho,epsilon)}
For $\gamma \geq 0$, a state $\rho^{AR}$ and a CPTP map $\cN:A \to BK$ define
    \begin{align}
        a(\rho, \gamma)&:= \inf_{\Lambda_1:A\to BK} \; \frac{1}{2} I(B:RR')_{\tau_1}
        \quad \quad \quad \text{s.t.} \quad F(\sigma^{BKR},\tau_1^{BKR}) \geq 1-\gamma,\nonumber\\
        u(\rho, \gamma)&:= 
        % \inf_{\substack{\>\>\>\Lambda_2:A \to BK\\ \Lambda_3:E  \to E'}}
        \inf_{\Lambda_3:E  \to E'} \;
        \inf_{\Lambda_2:A \to BK}
        S(BE')_{\tau_3}
        \quad \>\> \> \text{s.t.} \quad F(\sigma^{BKR},\tau_2^{BKR}) \geq 1-\gamma,\nonumber  
   %q(\rho, \gamma)&:=  \frac{1}{2}I(BE':RR')_{\tau_3}    \nonumberr
    \end{align}
    where $\sigma^{BKR}=(\cN \ox \cid_R)\rho^{AR}$,  
    the maps $\Lambda_{1,2,3}$ are CPTP, and 
    $\Lambda_1:A\to BK$, $\Lambda_2:A\to BK$,  
    $U_{\Lambda_2}: A \hookrightarrow BKE$ is an isometric extension of  $\Lambda_2$, with $E$ as an environment system,     
    $\Lambda_3:E\to E'$ where   
    the choice of $E'$ is part of the optimization, and  
 %The quantum mutual information and the von Neumann entropy are respectively in terms of the states 
 the states in the above quantities are defined as 
\begin{align*}
\tau_1^{BKRR'}&:=(\Lambda_1\ox \cid_{RR'}) \left(\proj{\rho}^{ARR'} \right)\\
\tau_2^{BKR}&:=(\Lambda_2\ox \cid_R) \left( \proj{\rho}^{ARR'}  \right)\\
\tau_3^{BKE'R}&:=(\Lambda_3  \ox \cid_{BKR}) \left( (U_{\Lambda_2}\ox \1_{R})\proj{\rho}^{ARR'} (U_{\Lambda_2}\ox \1_{R})^{\dagger} \right),
\end{align*} 
with the state $\ket{\rho}^{ARR'}$  a purification of $\rho^{AR}$ and  
    $\tau_1^{BKR}=\Tr_{R'}\tau^{BKRR'}$. %and $\tau_2^{BKER}=\Tr_{R'}\tau_2^{BKERR'}$.
\end{definition}
 These functions are  defined for a given channel $\cN$, however, we drop the dependency on the channel for the simplicity of the notation. 

\begin{remark}\label{remark: min vs inf}
The infimums in the above definition are attainable, and therefore they can be replaced with  minimums. 
The first optimization is over a compact set of CPTP maps with bounded input and output dimensions. 
In the second optimization, system $E$
is an environment system of the map $\Lambda_2$, which is bounded as 
$|E|\leq |A|\cdot|B|\cdot|K|$.
Also, the von Neumann entropy is a concave function of states. Therefore, the infimum is attained
    by an extremal CPTP map $\Lambda_3:E \to E'$.  The input dimension of $\Lambda_3$ is bounded,
    therefore, the number of the operators in the Kraus representation of  an extremal $\Lambda_3$ with input dimension $E$,
is  $|E|$.  This implies that the dimension of system $E'$ is bounded as well.
\end{remark}

%%%%%%%%%%%%%%%%%%%%%%%%%%%%%%%%%%%%%%%%%%%%%%%%%
\section{Entanglement-assisted simulation}\label{sec: E-assisted}
In this section, we obtain that the optimal entanglement-assisted qubit rate is equal to $a(\rho^{AR},0)$ (where ``$a$'' stands for the assisted rate). So, we first prove various properties of this function, which we apply to obtain the optimal rate. 
The entanglement-assisted qubit rate means that we allow the encoder and the decoder to consume entanglement at any rate.
\begin{lemma}\label{lemma: f properties}
    The function $a(\rho, \gamma)$  in Definition~\ref{def: a(rho,epsilon)}
    has the following properties:
    \begin{enumerate}
        \item It is a non-increasing function of $\gamma$.
        \item It is convex in $\gamma$, i.e., 
        $a(\rho, \lambda \gamma_1 + (1-\lambda) \gamma_2) \leq 
        \lambda a(\rho, \gamma_1) + (1-\lambda) a(\rho, \gamma_2)$.
        \item It is subadditive, i.e. $a(\rho_1 \ox \rho_2, \gamma)\geq a(\rho_1, \gamma)+a(\rho_2, \gamma)$.
%        \item It is upper semi-continuous at $\epsilon=0$.
        \item It is continuous for all $\gamma \geq 0$.
    \end{enumerate}
\end{lemma}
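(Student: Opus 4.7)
The plan is to handle the four properties in order, using throughout that the infimum defining $a(\rho,\gamma)$ is attained (Remark~\ref{remark: min vs inf}) so that optimizers may be invoked. Property~1 is immediate: if $\gamma_1\leq\gamma_2$, the feasible set $\{\Lambda_1:F(\sigma^{BKR},\tau_1^{BKR})\geq 1-\gamma_2\}$ contains the feasible set at $\gamma_1$, so the infimum of $\tfrac{1}{2}I(B:RR')$ can only decrease. For property~2, let $\Lambda^{(1)},\Lambda^{(2)}$ be optimizers for $a(\rho,\gamma_1),a(\rho,\gamma_2)$ and form the mixture $\bar\Lambda=\lambda\Lambda^{(1)}+(1-\lambda)\Lambda^{(2)}$. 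Concavity of fidelity in its second argument gives $F(\sigma^{BKR},\bar\tau^{BKR})\geq 1-(\lambda\gamma_1+(1-\lambda)\gamma_2)$, so $\bar\Lambda$ is feasible at $\lambda\gamma_1+(1-\lambda)\gamma_2$. Introducing a classical flag system $X$ recording which branch of the mixture is taken, and using that $\tau_1^{RR'}=\rho^{RR'}$ regardless of $\Lambda_1$, one has $\omega^{XRR'}=\omega^X\otimes\rho^{RR'}$, hence $I(X:RR')_\omega=0$, and
\begin{equation*}
I(B:RR')_{\bar\tau}\;\leq\; I(BX:RR')_\omega\;=\;I(B:RR'|X)_\omega\;=\;\lambda I(B:RR')_{\tau^{(1)}}+(1-\lambda)I(B:RR')_{\tau^{(2)}},
\end{equation*}
which yields convexity.

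For property~3, start with an optimizer $\Lambda:A_1A_2\to B_1K_1B_2K_2$ for $a(\rho_1\otimes\rho_2,\gamma)$ and define marginal maps $\Lambda^{(i)}(X):=\Tr_{B_{\bar\imath}K_{\bar\imath}}\Lambda(X\otimes\rho_{\bar\imath}^{A_{\bar\imath}})$ with $\bar\imath=3-i$. Because the input factorizes, the state $\tau^{(i)}$ obtained by applying $\Lambda^{(i)}$ to the purification of $\rho_i$ coincides with $\Tr_{B_{\bar\imath}K_{\bar\imath}R_{\bar\imath}R'_{\bar\imath}}\tau$, and monotonicity of fidelity under the partial trace channel gives $F(\sigma_i,\tau^{(i)})\geq F(\sigma_1\otimes\sigma_2,\tau)\geq 1-\gamma$, so each $\Lambda^{(i)}$ is feasible for $a(\rho_i,\gamma)$. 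The heart of the argument is the mutual-information inequality
\begin{equation*}
I(B_1B_2:R_1R'_1R_2R'_2)_\tau\;\geq\;I(B_1:R_1R'_1)_{\tau^{(1)}}+I(B_2:R_2R'_2)_{\tau^{(2)}}.
\end{equation*}
Using the factorization $\tau^{R_1R'_1R_2R'_2}=\rho_1^{R_1R'_1}\otimes\rho_2^{R_2R'_2}$, expanding the entropies shows that the difference of the two sides equals $I(B_1R_1R'_1:B_2R_2R'_2)_\tau - I(B_1:B_2)_\tau$, which is non-negative since discarding $R_1R'_1$ and $R_2R'_2$ can only decrease quantum mutual information.

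Property~4 on the open interval $(0,1)$ is automatic, since any finite convex function is continuous on any open sub-interval of its domain. The substantive step is continuity at $\gamma=0$. By Remark~\ref{remark: min vs inf} the feasible set of CPTP maps has bounded output dimension and is compact, while both $\Lambda_1\mapsto\tfrac{1}{2}I(B:RR')_{\tau_1}$ and $\Lambda_1\mapsto F(\sigma^{BKR},\tau_1^{BKR})$ are continuous. For any sequence $\gamma_n\to 0^+$ with optimizers $\Lambda_n$, compactness yields a subsequential limit $\Lambda^*$ with $F(\sigma,\tau^*)=1$, hence feasible at $\gamma=0$; continuity of mutual information then gives $\liminf_n a(\rho,\gamma_n)=\tfrac{1}{2}I(B:RR')_{\tau^*}\geq a(\rho,0)$. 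Combined with the easy monotonicity bound $a(\rho,\gamma_n)\leq a(\rho,0)$, this proves $\lim_{\gamma\to 0^+}a(\rho,\gamma)=a(\rho,0)$.

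The main technical obstacle will be property~3: one must correctly identify the marginal maps and then extract the quantum mutual-information inequality. The reduction relies crucially on the product structure of the reference marginal $\rho_1^{R_1R'_1}\otimes\rho_2^{R_2R'_2}$; without this product structure, the analogous inequality for a general joint reference state would fail, which is precisely why the statement is restricted to tensor-product inputs.
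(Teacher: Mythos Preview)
Your proof is correct and follows essentially the same route as the paper for all four items: property~1 is the trivial monotonicity of the feasible set; property~2 uses a flagged mixture of the two optimizers together with concavity of fidelity and the chain rule $I(B:RR')\leq I(BF:RR')=I(B:RR'|F)$ (the paper realizes the mixture via an explicit isometric extension $U_0=\sqrt{\lambda}\,U_1\otimes\ket{00}^{FF'}+\sqrt{1-\lambda}\,U_2\otimes\ket{11}^{FF'}$, but the content is identical); property~4 combines convexity-implies-continuity on $(0,\infty)$ with a compactness argument at $\gamma=0$, phrased in the paper as upper/lower semicontinuity rather than via sequences.

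The one substantive difference is in property~3. The paper constructs the marginal isometries $U_i$ (by freezing the other tensor factor to $\ket{\rho_{\bar\imath}}$) and then invokes an external superadditivity lemma for mutual information (Lemma~\ref{lem:superadditivity-orig}). You instead prove that inequality inline: using the product structure $\tau^{R_1R'_1R_2R'_2}=\rho_1^{R_1R'_1}\otimes\rho_2^{R_2R'_2}$, you rewrite the gap as $I(B_1R_1R'_1:B_2R_2R'_2)_\tau-I(B_1:B_2)_\tau\geq 0$, which follows from data processing. Your argument is more self-contained and makes explicit why the tensor-product hypothesis on the reference is needed; the paper's version isolates the inequality as a reusable black box.
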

We prove this lemma in the appendix  section~\ref{sec: Proof of Lemma f properties}.

\begin{theorem}\label{thm: E-assisted}
    The optimal entanglement-assisted rate for the simulation of the channel
    $\cN:A \to BK$ is equal to $a(\rho^{AR},0)$ where
    this function is defined in Definition~\ref{def: a(rho,epsilon)}.
\end{theorem}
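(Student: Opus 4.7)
The plan is to establish achievability and converse separately. Achievability will follow from applying the feedback Quantum Reverse Shannon Theorem of \cite{Bennett2014a} to a near-optimal surrogate channel realizing the infimum in the definition of $a(\rho,0)$; the converse will use a direct mutual-information bound together with the subadditivity and continuity of $a$ from Lemma~\ref{lemma: f properties}.

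For achievability, Remark~\ref{remark: min vs inf} guarantees that the infimum defining $a(\rho,0)$ is attained by a CPTP map $\Lambda^\ast:A\to BK$. Let $\tau^{\ast BKRR'} := (\Lambda^\ast\otimes\cid_{RR'})\proj{\rho}^{ARR'}$; by definition of $a(\rho,0)$, the marginal satisfies $\tau^{\ast BKR} = \sigma^{BKR}$. Alice applies the Stinespring isometry of $\Lambda^\ast$ to each of her $A^n$ systems locally, producing $B^n K^n E^n$ on her side (with $E$ the environment of $\Lambda^\ast$). She then invokes the feedback QRST of \cite{Bennett2014a}, with free shared entanglement, to transfer $B^n$ to Bob while preserving the joint state on $B^n K^n R^n R'^n$ up to vanishing error. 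The required qubit rate is $\tfrac{1}{2} I(B:RR')_{\tau^\ast} = a(\rho,0)$. Since Alice keeps $K^n$ locally and the simulated joint state on $B^n K^n R^n$ is close to $(\sigma^{BKR})^{\otimes n}$, the fidelity criterion of Eq.~(\ref{eq: F}) is met.

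For the converse, fix any $(n,\epsilon)$-code $(\mathcal{C}_n,\mathcal{D}_n)$ and define $\Lambda_n:A^n\to B^n K^n$ as the effective CPTP map obtained by composing the entire protocol and tracing out the shared entanglement, the encoder/decoder environments $W_A,W_B$, and the distilled systems $A_1 B_1$. Monotonicity of fidelity under partial trace gives $F(\sigma^{B^n K^n R^n}, \xi_n^{B^n K^n R^n}) \geq 1-\epsilon$, so $\Lambda_n$ is a feasible choice in the optimization defining $a(\rho^{\otimes n}, \epsilon)$, yielding
\begin{equation*}
a(\rho^{\otimes n}, \epsilon) \leq \tfrac{1}{2} I(B^n: R^n R'^n)_{\xi_n}.
\end{equation*}
Because $B^n$ is produced from $(M,B_0)$ by Bob's decoding, data processing combined with the chain rule, together with the fact that $B_0$ is uncorrelated with $R^n R'^n$ in $\nu_n$, gives
\begin{equation*}
I(B^n: R^n R'^n)_{\xi_n} \leq I(MB_0: R^n R'^n)_{\nu_n} = I(M: R^n R'^n \mid B_0)_{\nu_n} \leq 2\log|M|.
\end{equation*}
Combining with the subadditivity $a(\rho^{\otimes n},\epsilon) \geq n\, a(\rho,\epsilon)$ from Lemma~\ref{lemma: f properties} yields $\mathcal{Q}(n,\epsilon) \geq a(\rho,\epsilon)$; passing $n \to \infty$ with $\epsilon \to 0$ and invoking continuity of $a$ at $\gamma=0$ produces $Q^\ast \geq a(\rho,0)$.

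The main obstacle lies in making the achievability step rigorous: one must verify that simulating the surrogate $\Lambda^\ast$ in place of $\cN$, while Alice retains $K^n$, correctly reproduces $\cN$'s action on $\rho^{\otimes n}$ in the strong sense of Eq.~(\ref{eq: F}). The crucial point is that the feedback QRST preserves correlations with every system outside Alice, including the purifying reference $R'$, so the marginal equality $\tau^{\ast BKR} = \sigma^{BKR}$ is enough to conclude faithful simulation of $\cN$. A minor subtlety in the converse is that the distilled entanglement $A_1 B_1$ is silently absorbed into $\Lambda_n$; this is harmless for the qubit-rate bound, whereas simultaneously tracking the entanglement cost would require Lemma~\ref{lemma: decoupling}.
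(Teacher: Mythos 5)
Your achievability argument is essentially the paper's: apply the Stinespring isometry of the optimal $\Lambda^*$ to $A^n$ locally, then ship $B^n$ to Bob using an entanglement-assisted state-transfer primitive (the paper invokes quantum state redistribution, Theorem~\ref{thm: QSR}, with $K^n E^n$ as Alice's side information and the purifying systems $R^n R'^n$ as reference, achieving rate $\tfrac{1}{2}I(B:RR')_{\tau^*}$). You call the transfer step "feedback QRST," but since Bob starts with nothing, this collapses to the same rate; the key observation you flag — that the state on the full purifying reference $R^n R'^n$ is preserved, so $\tau^{*BKR}=\sigma^{BKR}$ gives faithful simulation after tracing $E^n R'^n$ — is exactly the point the paper makes.

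Your converse, however, takes a genuinely different and simpler route. The paper derives separate lower bounds on $n\mathcal{Q}+n\mathcal{E}$ and $n\mathcal{Q}-n\mathcal{E}$, each relying on Lemma~\ref{lemma: decoupling} and the Fannes--Audenaert inequality, and then sums them to isolate $2n\mathcal{Q}\geq I(B^n W_B:R^nR'^n)_\xi - \text{errors}$, from which the optimization definition of $a$ and superadditivity yield the bound. You instead observe that the whole protocol (after absorbing the fixed $\ket{\Phi}^{A_0 B_0}$, sending $M$, decoding, and tracing $W_A,W_B,A_1,B_1$) defines a feasible $\Lambda_n:A^n\to B^nK^n$ for $a(\rho^{\otimes n},\epsilon)$, so $2a(\rho^{\otimes n},\epsilon)\leq I(B^n:R^nR'^n)_{\xi_n}$, and then bound this directly by $I(MB_0:R^nR'^n)_{\nu_n}=I(M:R^nR'^n|B_0)_{\nu_n}\leq 2\log|M|=2n\mathcal{Q}$, using data processing, the chain rule, and the independence $I(B_0:R^nR'^n)_{\nu_n}=0$ (which holds because the encoder never touches $B_0$ or $R^nR'^n$). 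This bypasses the decoupling lemma entirely and produces $\mathcal{Q}(n,\epsilon)\geq a(\rho,\epsilon)$ with no error terms; the limits and continuity from Lemma~\ref{lemma: f properties} then finish as in the paper. Your version is more economical for the qubit rate alone; the paper's approach is heavier but tracks $\mathcal{E}$ along the way, which costs it the decoupling argument. (One small mismatch in terminology: the monotonicity you use from Lemma~\ref{lemma: f properties}, $a(\rho^{\otimes n},\epsilon)\geq n\,a(\rho,\epsilon)$, is \emph{super}additivity, though the paper's Lemma~\ref{lemma: f properties} labels it "subadditive" — the inequality direction in the lemma statement makes the intent unambiguous.)
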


\begin{proof}
The proof of the direct part (achievability of the rate) is as follows.
Let $\Lambda_1:A\to BK$ be the optimal CPTP map in Definition~\ref{def: a(rho,epsilon)}
at $\gamma=0$. Let $U_{\Lambda_1}:A\to BKE$ be the corresponding Stinespring dilation isometry of $\Lambda_1:A\to BK$. Alice applies $U_{\Lambda_1}$ to each copy of the
state $\rho^{AR}$. Then, the overall purified state is
\begin{align}
    \ket{\tau_1}^{BKERR'}=(U_{\Lambda_1} \ox \1_{RR'})\ket{\rho}^{ARR'}, \nonumber 
\end{align}
where $\ket{\rho}^{ARR'}$ is a purification of $\rho^{AR}$, and 
$R$ and $R'$ are inaccessible reference systems. Note that by definition 
$\Tr_{ER'}{\tau_1}^{BKERR'}={\tau}^{BKR}$.
Then Alice and Bob apply QSR to $n$ copies of the pure state $\tau_1^{BKERR'}$ to send system $B^n$ from Alice to Bob with systems $K^nE^n$ as the side information systems of Alice. The rate of this 
protocol is equal to $a(\rho,0)=\frac{1}{2}I(B:RR')_{\tau_1}+\eta_n$. After implementing this protocol, the state shared by Alice and Bob is $\epsilon_n$ close to $(\ket{\tau_1}^{BKERR'})^{\ox n}$.
Tracing out systems $E^nR'^n$ only increases the fidelity, hence, this protocol 
achieves the rate of $a(\rho,0)=\frac{1}{2}I(B:RR')_{\tau_1}+\eta_n$ and preserves the 
$1-\epsilon_n$ fidelity with the state $({\tau}^{BKR})^{\ox n}$. By Theorem~\ref{thm: QSR}, $\eta_n$ and $\epsilon_n$ vanish 
as $n$ grows very large.  
%

%\textcolor{red}{ We have continuity at $\gamma=0$ in the assisted case, so, it is not so important to distinguish $\gamma = 0$ from $\gamma \rightarrow 0$, but it is worth going through the alternative achievability proof for the latter in this simpler single-letter setting.  Suppose the function $a(\rho,\gamma)$ is defined with trace distance constraint $\| \sigma^{BKR} - \tau_1^{BKR}\|_1 \leq \gamma$ instead of the fidelity constraint.  We want to show that $\lim_{\gamma \rightarrow 0} f(\rho, \gamma)$ is an achievable rate.  To do so: \\ 1. Pick a block length $n$.\\ 2. Let $\epsilon_n$ be an upper bound for the global error  in trace distance for quantum state redistribution for block length $n$.\\ 3. Consider $a(\rho,\frac{\epsilon_n}{n})$.  Let $\Lambda_1$,  $\tau_1$ etc be the optimal solutions and proceed with the protocol as described above.\\ 4. QSR prepares a state globally at a distance no more than $\epsilon_n$ from $\tau_1^{\otimes n}$. \\ 5. By the triangle inequality for the trace distance, and using a telescopic sum,  $\| \tau_1^{\otimes n} - \sigma^{\otimes n} \|_1 \leq n \| \sigma^{BKR} - \tau_1^{BKR}\|_1 \leq \epsilon_n$. \\6. Combining the bounds in 4 and 5, the distance of the final state is no more than $2 \epsilon_n$ from $\sigma^{\otimes n}$. \\ 7. Convert back to fidelity to match equation (1) to conclude achievability.  }

\bigskip

In the following, we obtain the converse bound.  For any protocol with block length $n$ and error $\epsilon$ 
\begin{align}\label{eq: unassisted converse}
    S(M)_{\nu}+S(B_0)_{\Phi}&\geq S(MB_0)_{\nu} \nonumber \\
    &= S(B^n W_BB_1)_{\xi} \nonumber \\
    &\geq S(B^n W_B)_{\xi}+S(B_1)_{\xi}-n\delta(n,\epsilon) \nonumber \\
    &\geq S(B^n W_B)_{\xi}+S(B_1)_{\Phi}-n\delta(n,\epsilon)-n\delta_1(n,\epsilon),  
\end{align}
where the second line is due to applying the decoding isometry.
The third line follows from Lemma~\ref{lemma: decoupling}.
The last line follows from the decodability: the 
output state on system $B_1$ is $2\sqrt{2\epsilon}$-close 
to the original state $\Phi$ in trace norm; then the inequality follows 
by applying the Fannes-Audenaert inequality, where 
$\delta_1(n,\epsilon)=\frac1n \sqrt{2\epsilon} \log(|A_1|) + \frac1n h(\sqrt{2\epsilon})$.
From the above, we obtain
\begin{align}\label{Q+E}
    n\mathcal{Q}(n,\epsilon )+n\mathcal{E}(n,\epsilon )&=S(M)_{\nu}+S(B_0)_{\Phi}-S(B_1)_{\Phi} \nonumber\\
     &\geq S(B^n W_B)_{\xi}-n\delta(n,\epsilon)-n\delta_1(n,\epsilon), 
\end{align}
where $n\mathcal{E}(n,\epsilon )=S(B_0)_{\Phi}-S(B_1)_{\Phi}$.
Moreover, we obtain the following
\begin{align}
    S(M)_{\nu}& \geq S(M|K^n W_A A_1)_{\nu} \nonumber \\
    & = S(MK^n W_A A_1)_{\nu}-S(K^n W_A A_1)_{\nu} \nonumber \\
    & = S(A^n A_0)_{\rho\ox\Phi}-S(K^n W_A A_1)_{\nu} \nonumber \\
    & = S(A^n)_{\rho}+S(A_0)_{\Phi}-S(K^n W_A A_1)_{\nu} \nonumber \\
    & = S(R^n {R'}^n)_{\rho}+S(A_0)_{\Phi}-S(K^n W_A A_1)_{\nu} \nonumber \\
    & = S(R^n {R'}^n)_{\rho}+S(A_0)_{\Phi}-S(B^n W_B R^n {R'}^n B_1)_{\xi} \nonumber \\
%    & \geq S(R^n {R'}^n)_{\rho}+S(A_0)_{\Phi}-S(B^n W_B R^n {R'}^n )_{\xi}-S(B_1)_{\xi} -n\delta(n,\epsilon) \nonumber \\
    & \geq S(R^n {R'}^n)_{\rho}+S(A_0)_{\Phi}-S(B^n W_B R^n {R'}^n )_{\xi}-S(B_1)_{\xi} \nonumber \\
%    & \geq S(R^n {R'}^n)_{\rho}+S(A_0)_{\Phi}-S(B^n W_B R^n {R'}^n )_{\xi}-S(B_1)_{\Phi} -n\delta(n,\epsilon)-n\delta_1(n,\epsilon), \nonumber 
    & \geq S(R^n {R'}^n)_{\rho}+S(A_0)_{\Phi}-S(B^n W_B R^n {R'}^n )_{\xi}-S(B_1)_{\Phi} -n\delta_1(n,\epsilon), \nonumber 
\end{align}
where the third line is due to the definition of the encoding isometry. 
The fifth and sixth lines follow since the states $\ket{\rho}^{ARR'}$ and 
$\ket{\xi_{n}}^{B^n K^n W_A W_B R^n {R'}^n A_1 B_1}$ are pure.
%
% The penultimate line follows from Lemma~\ref{lemma: decoupling}.
The penultimate line follows from subadditivity of entropy.  
The last line follows from the decodability: the 
output state on system $B_1$ is $2\sqrt{2\epsilon}$-close 
to the original state $\Phi$ in trace norm; then the inequality follows 
by applying the Fannes-Audenaert inequality, where 
$\delta_1(n,\epsilon)=\frac1n \sqrt{2\epsilon} \log(|A_1|) + \frac1n h(\sqrt{2\epsilon})$.
In the last line, note that  $S(B_1)_{\Phi}=S(A_1)_{\Phi}$ holds. From the above, we obtain
\begin{align}\label{Q-E}
    n\mathcal{Q}(n,\epsilon )-n\mathcal{E}(n,\epsilon )&= S(M)_{\nu}+S(B_1)_{\Phi}-S(A_0)_{\Phi} \nonumber\\
%    & \geq S(R^n {R'}^n)_{\rho}-S(B^n W_B R^n {R'}^n )_{\xi}-n\delta(n,\epsilon)-n\delta_1(n,\epsilon),
    & \geq S(R^n {R'}^n)_{\rho}-S(B^n W_B R^n {R'}^n )_{\xi}-n\delta_1(n,\epsilon),
\end{align}
where $n\mathcal{E}(n,\epsilon )=S(A_0)_{\Phi} -S(A_1)_{\Phi} $.
By adding Eq.~(\ref{Q+E}) and Eq.~(\ref{Q-E}), we obtain
\begin{align}
    2n\mathcal{Q}(n,\epsilon ) &\geq S(R^n {R'}^n)_{\rho}-S(B^n W_B R^n {R'}^n )_{\xi}+S(B^n W_B)_{\xi} -n\delta(n,\epsilon)-2n\delta_1(n,\epsilon)  \nonumber\\
    &= I(B^n W_B: R^n {R'}^n )_{\xi}-n\delta(n,\epsilon)-2n\delta_1(n,\epsilon)  \label{Q Bound W_B} \\
    &\geq I(B^n: R^n {R'}^n )_{\xi}-n\delta(n,\epsilon)-2n\delta_1(n,\epsilon), \nonumber\\
    &\geq 2 na(\rho,\epsilon)-n\delta(n,\epsilon)-2n\delta_1(n,\epsilon), \label{Q Bound}
\end{align}
where the third line follows from the data processing inequality. 
The last line follows from 
the definition of $a(\cdot,\epsilon)$ and its superadditivity  
Lemma~\ref{lemma: f properties}.
%
%\textcolor{red}{Note that $n$ and $\epsilon$ remain independent in the analysis so far.}
Dividing by $2n$, 
$Q^* \geq a(\rho,\epsilon)-{1 \over 2} \delta(n,\epsilon)-\delta_1(n,\epsilon)$.  
Taking the limit $n\to \infty$ and $\epsilon \to 0$ in either order, 
$\delta(n,\epsilon)+\delta_1(n,\epsilon) \to 0$, so 
\begin{align}\label{Q Bound}
    Q^* &\geq  \lim_{\epsilon \to 0 }a(\rho,\epsilon) \nonumber \\
       &=  a(\rho,0). \nonumber 
\end{align}
The last line follows from Lemma~\ref{lemma: f properties} point 4, i.e., the continuity of the function at $\epsilon=0$.
\end{proof}

The entanglement-assisted simulation of an identity channel was already studied in \cite{general_mixed_state_compression}, where the optimal rate was found to be 
$S(CQ)_{\omega}-\frac{1}{2}S(C)_{\omega}$, (entropies are with respect to the Koashi-Imoto decomposition). Below, we show that we can obtain this result as a corollary of Theorem~\ref{thm: E-assisted}.\\
\begin{proposition}
 The optimal entanglement-assisted rate for the simulation of the identity channel
    $\cid:A \to A$ is equal to $a(\rho^{AR},0)=S(CQ)_{\omega}-\frac{1}{2}S(C)_{\omega}$. 
\end{proposition}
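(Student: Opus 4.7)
The plan is to apply Theorem~\ref{thm: E-assisted} and compute $a(\rho^{AR},0)$ for $\cN=\cid$, with the coherent-feedback system $K$ trivial. At $\gamma=0$ the fidelity constraint forces $(\Lambda_1\otimes\cid_R)(\rho^{AR})=\rho^{AR}$, which by the Koashi--Imoto theorem rigidly characterizes the admissible $\Lambda_1$. Writing the KI decomposition as $\rho^{AR}=\sum_c p_c\proj{c}^C\otimes\omega_c^{L_c}\otimes\rho_c^{Q_cR}$, every admissible $\Lambda_1$ is block-diagonal in $C$, is the identity on $Q$, and on each $L_c$ is a CPTP map $\Lambda_c$ preserving $\omega_c^{L_c}$.

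Next, I would simplify the objective. Since $\tau_1^{BR}=\rho^{AR}$, we have $S(B)_{\tau_1}=S(RR')_{\tau_1}=S(A)_\rho$, and since the Stinespring state $\ket{\tau_1}^{BE_\Lambda RR'}$ is pure, $S(BRR')_{\tau_1}=S(E_\Lambda)$. Therefore,
\begin{align*}
\tfrac{1}{2} I(B:RR')_{\tau_1}=S(A)_\rho-\tfrac{1}{2} S(E_\Lambda),
\end{align*}
and minimizing the left side amounts to \emph{maximizing} $S(E_\Lambda)$ over the KI-admissible family.

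The main step is the upper bound on $S(E_\Lambda)$. A canonical Stinespring for a KI-compliant $\Lambda_1$ copies the block label $c$ into a classical environment register $E_0$ and dilates each $\Lambda_c$ via $U_{\Lambda_c}:L_c\to L_c E_c$. Tracing $BRR'$ then yields $\rho^{E_\Lambda}=\sum_c p_c\proj{c}^{E_0}\otimes\kappa_c^{E_c}$, where $\kappa_c^{E_c}$ is the marginal on $E_c$ of $(U_{\Lambda_c}\otimes\1^{L_c'})\ket{\omega_c}^{L_c L_c'}$ (here $L_c'$ purifies $\omega_c^{L_c}$ inside $R'$). Since that state is pure on $L_c E_c L_c'$ and the marginals on both $L_c$ and $L_c'$ equal $\omega_c^{L_c}$, subadditivity gives $S(\kappa_c^{E_c})=S(L_c L_c')\leq 2S(\omega_c^{L_c})$. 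Hence $S(E_\Lambda)\leq H(C)+2\sum_c p_c S(\omega_c^{L_c})$, and equality is achieved by the ``refresh'' choice $\Lambda_c(\cdot)=(\Tr[\cdot])\,\omega_c^{L_c}$, for which the Stinespring leaves $L_c$ and $L_c'$ in a product state.

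Combining and using $S(A)_\rho=H(C)+\sum_c p_c[S(\omega_c^{L_c})+S(\zeta_c^{Q_c})]$ with $\zeta_c^{Q_c}=\Tr_R\rho_c^{Q_cR}$, this simplifies to
\begin{align*}
a(\rho^{AR},0)=\tfrac{1}{2} H(C)+\sum_c p_c S(\zeta_c^{Q_c})=S(CQ)_\omega-\tfrac{1}{2} S(C)_\omega,
\end{align*}
which is the claimed formula. The main delicacy is the KI parametrization of $\Lambda_1$ that exposes the block-indexed Stinespring structure; once this is set up, the required bound on $S(E_\Lambda)$ reduces to a one-line subadditivity argument, and the achievability is witnessed by the refresh map.
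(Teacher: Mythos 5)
Your proof is correct and uses essentially the same ingredients as the paper's: Koashi--Imoto rigidity of the admissible maps, a subadditivity bound on the environment entropy, and a copy/refresh map as the witness for achievability. The main organizational difference is that you keep the redundant system $N$ (your $L_c$) and let its contribution cancel against $S(A)_\rho$, whereas the paper first reduces to the $N$-free state $\omega^{CQR}$ via CPTP maps in both directions; you also recast the minimization of $\tfrac12 I(B:RR')$ as maximization of $S(E_\Lambda)$, which the paper does implicitly. Neither difference changes the mathematics.

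One phrasing in your argument deserves a caveat. You say ``a canonical Stinespring for a KI-compliant $\Lambda_1$ copies the block label $c$ into a classical register $E_0$''. Strictly, the minimal KI Stinespring of a general admissible $\Lambda_1$ does \emph{not} copy $c$; appending $\ket{c}^{E_0}$ dilates the pinched channel $\mathcal{P}_C\circ\Lambda_1$, which is a different CPTP map (though it agrees with $\Lambda_1$ on the relevant block-diagonal input). This does not damage the bound, because the environment marginal of \emph{any} admissible $\Lambda_1$ is already $\tau_1^{E}=\sum_c p_c\,\kappa_c^{E}$ (the orthogonal $\ket{c}^{C'}$ sitting inside the reference $R'$ decoheres the $c$-blocks), whence $S(E_\Lambda)=S\bigl(\sum_c p_c\kappa_c\bigr)\le H(C)+\sum_c p_c S(\kappa_c)$ by concavity, and the rest follows as you wrote. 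Equivalently, one can note that pinching $C$ is admissible and can only increase $S(E_\Lambda)$, so the restriction to ``copying'' channels is without loss of generality; stating that explicitly would make the step airtight.
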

\begin{proof}
For $\cN=\cid$, the function at $\gamma=0$ is 
    \begin{align}
        a(\rho^{AR}, 0)&:= \min_{\Lambda:A\to A} \frac{1}{2} I(A:RR')_{\tau}
        \quad \quad \text{s.t.} \quad F(\rho^{AR},\tau^{AR}) =1. \nonumber
    \end{align}
Consider the KI-decomposition of $\rho^{AR}$ only with systems $CQ$ and its purification 
\begin{align}
\omega^{CNQR}&=\sum_c p_c \proj{c}^C \ox \omega_c^{QR} \nonumber\\
\ket{\omega}^{CQRR'C'}&=\sum_c \sqrt{p_c} \ket{c}^C \ox \ket{\omega_c}^{QRR'}\ox \ket{c}^{C'},
\end{align}
where  $R'C'$ are purifying systems.
Note that $a(\omega^{CQR}, 0)=a(\rho^{AR}, 0)$ holds since there are CPTP maps in both directions $\cT:A \to CQ$ and $\cR:CQ \to A$, and applying CPTP maps only increases the fidelity and lowers the mutual information. So, we evaluate the function below
\begin{align}
        a(\omega^{CQR}, 0)&:= \min_{\Lambda:CQ\to CQ} \frac{1}{2} I(CQ:RR')_{\tau}
        \quad \quad \text{s.t.} \quad F(\omega^{CQR},\tau^{CQR}) =1. \nonumber
    \end{align}
Let $\Lambda_0:C \to CC''$ be a map which copies system $C$ to another register $C''$. This gives the state
\begin{align}
\ket{\tau}^{CQRR'C'C''}&=\sum_c \sqrt{p_c} \ket{c}^C \ox \ket{\omega_c}^{QRR'}\ox \ket{c}^{C'}\ox \ket{c}^{C''},
\end{align}
and the mutual information evaluates to
\begin{align}
I(CQ:RR'C')_{\tau}&=S(CQ)_{\tau}+S(RR'C')_{\tau}-S(CQRR'C')_{\tau} \nonumber\\
&=S(CQ)_{\tau}+S(CC''Q)_{\tau}-S(C'')_{\tau} \nonumber\\
&=S(CQ)_{\omega}+S(CQ)_{\omega}-S(C)_{\omega}, \nonumber
\end{align}
where the second follows since the overall state on $CQRR'C'C''$ is pure. The last equality follows because $C''$ is a copy of $C$.
This implies that $2a(\omega^{CNQR}, 0)\geq 2S(CQ)_{\omega}-S(C)_{\omega}$.
Now, we show that $\Lambda_0:C \to C''$ is optimal. By KI-theorem the isometric extension $U_{\Lambda}$ of any CPTP map $\Lambda:CQ \to CQ$ that preserves $\omega^{CQR}$ can only act as follows
\begin{align}
\ket{\nu}^{CQRR'C'M}&=(U_{\Lambda}\ox \1_{RR'})\ket{\omega}^{CQRR'} \nonumber\\
&=\sum_c \sqrt{p_c} \ket{c}^C \ox \ket{\omega_c}^{QRR'}\ox \ket{c}^{C'}\ox \ket{v_c}^{M}.
\end{align}
Hence, the mutual information is bounded as
\begin{align}
I(CQ:RR'C')_{\nu}&=S(CQ)_{\nu}+S(RR'C')_{\nu}-S(CQRR'C')_{\nu} \nonumber\\
&=S(CQ)_{\nu}+S(CMQ)_{\nu}-S(M)_{\nu} \nonumber\\
&=S(CQ)_{\omega}+S(CQ)_{\omega}+S(M|C)_{\omega}-S(M)_{\nu}, \nonumber\\
&=S(CQ)_{\omega}+S(CQ)_{\omega}-S(M)_{\nu}, \nonumber
\end{align}
in the third line the $S(M|C)_{\omega}=0$ because the state on $M$ given $C=c$ is pure. In the last line  $S(M)_{\nu} $ is maximized if $\ket{v_c}^{M}$ are orthogonal, that is $S(M)_{\nu} =S(C)_{\nu} $.

\end{proof}

\section{Unassisted Simulation}\label{sec: unassisted}
In this section, we obtain achievability and converse bounds for the unassisted qubit rate. 
The unassisted model refers to the case where the encoder and  decoder do not share or distill any entanglement, namely registers $A_0,A_1,B_0$ and $B_1$ are trivial registers. 
\begin{theorem}\label{thm: unassisted}
    For the unassisted  simulation of the channel $\cN:A \to BK$, the following  regularized rate 
    %$Q=\lim_{m \to \infty} \frac{1}{m}u(\rho^{\ox m},0)$ 
    is achievable where $u(\rho,\gamma)$ is defined in Definition~\ref{def: a(rho,epsilon)}
\begin{align}
    Q^* \leq  \lim_{m \to \infty}  \frac{1}{m} u(\rho^{\ox m}, \frac{1}{m^9}). \nonumber
 % \leq \lim_{m \to \infty}  \lim_{\gamma \to 0} \frac{1}{m}u(\rho^{\ox m},\gamma), \nonumber
\end{align}
%where $\epsilon_k$ and $\eta_k$ converge to 0 as $k \to \infty$.
%where $\gamma_k$ is a sequence satisfying $k^2 \gamma_k \to 0$ as $k \to \infty$.
 %   
Moreover, any achievable quantum rate is lower bounded as
    \begin{align}
    Q^* \geq \lim_{\gamma \to 0} \lim_{m \to \infty} \frac{1}{m}u(\rho^{\ox m},\gamma). \nonumber
\end{align}
\end{theorem}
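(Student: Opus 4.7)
The plan is to handle achievability and converse separately, and I would attack the converse first since it reduces, once the right identifications are made, to a one-line entropy bound. Given any $(n,\epsilon)$ unassisted code $(\mathcal{C}_n,\mathcal{D}_n)$, I form the composite CPTP map $\tilde{\Lambda}_2 := (\mathcal{D}_n \ox \cid_{K^n})\circ \mathcal{C}_n : A^n \to B^n K^n$, whose Stinespring dilation $U_{\mathcal{D}_n}U_{\mathcal{C}_n}$ has combined environment $W_A W_B$. The fidelity condition~(\ref{eq: F}) states exactly that $F(\sigma^{BKR\,\ox n}, \tilde{\Lambda}_2(\rho^{\ox n})) \geq 1-\epsilon$, so $\tilde{\Lambda}_2$ is feasible in the definition of $u(\rho^{\ox n},\epsilon)$; choosing $\Lambda_3 = \Tr_{W_A}$ gives $E' = W_B$, and by inspection of the Stinespring picture the associated $\tau_3^{B^n K^n W_B R^n}$ is precisely the marginal of $\xi_n$ on these registers. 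Thus $u(\rho^{\ox n},\epsilon) \leq S(B^n E')_{\tau_3} = S(B^n W_B)_{\xi_n}$, and since Bob's decoding isometry preserves entropy one has $S(M)_{\nu_n} = S(B^n W_B)_{\xi_n} \leq \log|M| = n\mathcal{Q}(n,\epsilon)$. Dividing by $n$, taking $n \to \infty$ for fixed $\epsilon$, and then $\epsilon \to 0$ yields the converse.

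For the achievability I would fix a block length $m$ and let $\Lambda_2^\star, \Lambda_3^\star$ attain $u(\rho^{\ox m},1/m^9)$. Alice applies the combined Stinespring isometry $U_{\Lambda_3^\star} U_{\Lambda_2^\star}$ to each of $n$ independent $m$-blocks, producing a pure state on her registers $B^{mn} K^{mn} E'^{\star n} F^{\star n}$ (with $F^{\star n}$ discarded) and the reference $R^{mn} R'^{mn}$. The marginal on a single $B^m E'^\star$ has von Neumann entropy $u(\rho^{\ox m},1/m^9)$, so Alice Schumacher-compresses her $n$ iid copies of $B^m E'^\star$ into roughly $n\,S(B^m E'^\star)$ qubits and transmits them to Bob, who decompresses to obtain $B^{mn} E'^{\star n}$ while preserving the correlations with $K^{mn} R^{mn}$. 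Tracing $E'^{\star n}$ on Bob's side then yields a state on $B^{mn} K^{mn} R^{mn}$ close to $\tau_2^{\ox n}$, and hence close to $\sigma^{BKR\,\ox mn}$ by the feasibility of $\Lambda_2^\star$. The per-copy qubit rate is $\frac{1}{m}u(\rho^{\ox m},1/m^9)$, and taking $m \to \infty$ with an appropriately chosen $n(m)$ delivers the claimed regularized bound.

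The main obstacle is balancing two incompatible error scales. The per-block approximation error $1/m^9$ coming from $\Lambda_2^\star$ aggregates like $n/m^9$ across the $n$-fold product (via the standard fidelity-to-trace-distance inequalities), while Schumacher compression contributes an additive error that only tends to zero as $n \to \infty$. One must therefore pick $n(m)$ in the regime $1 \ll n(m) \ll m^9$ (for instance $n(m) = m$), so that both error sources vanish as $m \to \infty$; the deliberately loose polynomial $1/m^9$ in the statement leaves ample room both for this and for the square-root losses in the fidelity/trace-distance conversion. A secondary technical point is that Schumacher compression has to be invoked in its coherent form, preserving correlations with every complementary system (including the inaccessible reference $R^{mn}$), not merely reproducing Bob's local marginal; this is standard but essential, as the final fidelity is a bound on the joint state with $R^{mn}$.
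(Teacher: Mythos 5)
Your converse argument matches the paper's: you identify the full protocol with a feasible $\tilde\Lambda_2$, set $\Lambda_3=\Tr_{W_A}$ so that $E'=W_B$, and use that the decoding isometry preserves entropy to get $u(\rho^{\ox n},\epsilon)\leq S(B^nW_B)_{\xi}=S(M)_{\nu}\leq n\cQ(n,\epsilon)$. That is exactly the paper's chain (read in the reverse direction), and it is correct.

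Your achievability plan also follows the paper's strategy (block $m$ copies, take the optimizers $\Lambda_2^\star,\Lambda_3^\star$ at error $\gamma_m$, Schumacher-compress $B^mE'^\star$ against the combined reference $K^m E''^\star R^m R'^m$, and tune the relation between $m$ and the number $n$ of super-blocks). However, the error-balancing step contains a genuine quantitative error. You claim any $n(m)$ in the range $1\ll n(m)\ll m^9$ works and propose $n(m)=m$. This choice fails: the Schumacher compression error on $n$ copies of a system of local dimension $|B^mE'^\star|$ scales like $\bigl(\log|B^mE'^\star|/\sqrt{n}\bigr)^2$, and since Remark~\ref{remark: min vs inf} only bounds $\log|E'^\star|$ by $O(m)$, this is $O(m^2/n)$. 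With $n=m$ this \emph{diverges} rather than vanishing. The correct lower constraint is $n\gg m^2$; the upper constraint, once you correctly propagate the Fuchs--van-de-Graaf square root ($\tfrac12\|\tau_3-\sigma^{\ox m}\|_1\lesssim\sqrt{2\gamma_m}$ per block, aggregating to $n\sqrt{2\gamma_m}$), is $n\ll m^{4.5}$, not $m^9$. So the admissible window is $m^2\ll n\ll m^{4.5}$; the paper takes $n=m^4$ (equivalently $m=k^{1/4}$) and $\gamma_m=1/m^9$, which is the source of the exponent $9$ in the statement. You are right that the polynomial is deliberately loose, but the example you chose to instantiate the balance lies outside the valid window, and your stated lower end $n\gg 1$ misses the dimension-dependence of the Schumacher error entirely.
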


% \textcolor{red}{I prefer using $m$ for the above, keeping $n$ for the block length.  I've made the change.} 

\begin{proof}
The proof of the direct part is as follows.
Consider $m$ copies of the state $\rho^{AR}$ as a single state and  
the optimal CPTP maps in Definition~\ref{def: a(rho,epsilon)}, i.e. $\Lambda_2:A^m\to B^m K^m$ and $\Lambda_3:E\to E'$
at $\gamma_m$. 
Let $U_{\Lambda_2}:A^m\to B^m K^m E$ and $U_{\Lambda_3}:E\to E' E''$ be the corresponding Stinespring dilation isometries of $\Lambda_2$ and $\Lambda_3$, respectively.
 Alice applies $U_{\Lambda_2}$ and $U_{\Lambda_3}$ to $m$ copy of the
state $\rho^{AR}$ as follows
\begin{align}
    \ket{\tau_2}^{B^m K^m ER^m{R'}^m}&=(U_{\Lambda_2} \ox \1_{R^m{R'}^m})\ket{\rho}^{A^m R^m{R'}^m }, \nonumber\\
    \ket{\tau_3}^{B^m K^m E'E''R^m{R'}^m}&=(U_{\Lambda_3} \ox \1_{B^mK^mR^m{R'}^m})\ket{\tau_2}^{B^m K^m ER^m{R'}^m}, \nonumber 
\end{align}
where $\ket{\rho}^{A^m R^m{R'}^m }$ is a purification of $(\rho^{AR})^{\ox m}$, and systems
$R^m$ and ${R'}^m$ are held by inaccesible reference systems. 
%We note that   definition of  ${\Lambda_2}$ and ${\Lambda_3}$ implies 
%$F\left(\Tr_{E'E''R'^m}({\tau_3}^{B^m K^m E'E''R^m{R'}^m}),(\sigma^{BKR})^{\ox m}\right)\geq 1-\gamma$.
%
%
Then Alice and Bob perform Schumacher compression on $k$ copies of $\tau_3^{ B^m K^m E'E''R^m{R'}^m}$ to send systems $B^mE'$ from Alice to Bob,
 %with error $\epsilon_k$, 
assuming that the systems $K^mE''R^m{R'}^m$ are held by a reference. 
The rate of this 
protocol is equal to $S(B^m E')_{\tau_3} +\eta_k$. 
Moreover, this asymptotic protocol preserves the fidelity with $k$ copies of $
\tau_3^{ B^m K^m E'E''R^m{R'}^m}$, i.e.
\begin{align}
F((\tau_3^{ B^m K^m E'E''R^m{R'}^m})^{\ox k},\upsilon^{ B^{mk} K^{mk}  E'^{k} E''^{k} R^{mk} {R'}^{mk}})\geq 1-\epsilon_k,
\end{align}
where $(\upsilon^{ B^m K^m E'E''R^m{R'}^m})^{\ox k}$ is the decoded state of Schumacher compression, and $\eta_k$ and $\epsilon_k$ converge to 0 as $k$ converges to $\infty$. 
More precisely, in Schumacher compression we can choose $\epsilon_k = (\frac{\log|B^mE'|}{\cso(\sqrt{k})})^2$. From Remark~\ref{remark: min vs inf}, we obtain the bound  $|E'| \leq ( |A|\cdot |B|\cdot |K|)^{2m}$,  hence, the error can be bounded as
\begin{align}
\epsilon_k \leq (\frac{2m\log  |A|\cdot |B|^{\frac{3}{2}}\cdot |K|}{\cso(\sqrt{k})})^2.
\end{align}
At the last step of the proof, we will take the limit $m \to \infty$. Therefore,  to have vanishing error in the limit of $k \to \infty$, we may choose $m=k^{\frac{1}{4}}$

Tracing out systems $E'^kE''^kR'^{mk}$ only increases the fidelity, hence, we obtain
\begin{align}
F((\tau_3^{ B^m K^m R^m})^{\ox k},\upsilon^{ B^{mk} K^{mk}   R^{mk} })\geq 1-\epsilon_k. 
\end{align}
Hence, Fuchs-van de Graaf inequality implies that
\begin{align}\label{eq: D1}
\frac{1}{2}\norm{(\tau_3^{ B^m K^m R^m})^{\ox k}-\upsilon^{ B^{mk} K^{mk} R^{mk} }}_1\leq\sqrt{ 1-(1-\epsilon_k)^2}. 
\end{align}
In what follows, we show that 
\begin{align}\label{eq: D2}
\frac{1}{2}\norm{(\sigma^{ BKR})^{\ox mk}-\upsilon^{ B^{mk} K^{mk} R^{mk} }}_1\leq\sqrt{ 1-(1-\epsilon_k)^2}+k\sqrt{1-(1-\gamma_m)^2} \leq \sqrt{2\epsilon_k}+k\sqrt{2\gamma_m}. 
%F((\sigma^{ BKR})^{\ox mk},\upsilon^{B^{mk} K^{mk}   R^{mk} })\geq 1-\epsilon_k-k\gamma_m . 
\end{align}
By definition of $\tau_3$ and applying  Fuchs-van de Graaf inequality we have \begin{align}
\frac{1}{2}\norm{\tau_3^{ B^m K^m R^m}-(\sigma^{ BKR})^{\ox m}}_1 \leq \sqrt{1-(1-\gamma_m)^2}.
\end{align}
We prove that the trace distance between $k$-fold tensor power of the above states   
is bounded by $\sqrt{1-(1-\gamma_m)^2}$.
To this end, we apply triangle inequality as follows
\begin{align}
\frac{1}{2}&\norm{(\tau_3^{ B^m K^m R^m})^{\ox k}-(\tau_3^{ B^m K^m R^m})^{\ox k-1}\ox (\sigma^{ BKR})^{\ox m} +(\tau_3^{ B^m K^m R^m})^{\ox k-1}\ox (\sigma^{ BKR})^{\ox m} -(\sigma^{ BKR})^{\ox mk}}_1 \nonumber\\
& \leq \frac{1}{2}\norm{(\tau_3^{ B^m K^m R^m})^{\ox k}-(\tau_3^{ B^m K^m R^m})^{\ox k-1}\ox (\sigma^{ BKR})^{\ox m}}_1 +\frac{1}{2}\norm{(\tau_3^{ B^m K^m R^m})^{\ox k-1}\ox (\sigma^{ BKR})^{\ox m} -(\sigma^{ BKR})^{\ox mk}}_1 \nonumber\\
&\leq \sqrt{1-(1-\gamma_m)^2}+ \frac{1}{2}\norm{(\tau_3^{ B^m K^m R^m})^{\ox k-1}\ox (\sigma^{ BKR})^{\ox m} -(\sigma^{ BKR})^{\ox mk}}_1 \nonumber \\
&= \sqrt{1-(1-\gamma_m)^2}+ \frac{1}{2}\norm{(\tau_3^{ B^m K^m R^m})^{\ox k-1} -(\sigma^{ BKR})^{\ox m(k-1)}}_1,
\end{align}
We apply the above procedure for $\frac{1}{2}\norm{(\tau_3^{ B^m K^m R^m})^{\ox k-1} -(\sigma^{ BKR})^{\ox m(k-1)}}_1$, and repeat this $k-1$ times and obtain 
\begin{align}
\frac{1}{2} \norm{(\tau_3^{ B^m K^m R^m})^{\ox k}-(\sigma^{ BKR})^{\ox mk} }_1 \leq k\sqrt{1-(1-\gamma_m)^2}. \nonumber
\end{align}
From the above inequality and Eq.~(\ref{eq: D1}) we obtain the desired bound in Eq.~(\ref{eq: D2}).
In this equation $\epsilon_k$ converges to 0 as $k$ grows. 
We already set $m=k^{\frac{1}{4}}$, hence, by letting  $\gamma_m=\frac{1}{k^{2.25}}$ %, where $\lambda_k$ is any sequence satisfying $\lambda_k \to 0$ as $k\to \infty$.  
the upper bound in Eq.~(\ref{eq: D2}) converges to 0 as $k$ converges to $\infty$.
The rate of the above protocol is $S(B^m E')_{\tau_3} +\eta_k$. Dividing this by $m$ we obtain the rate 
\begin{align}
 \frac{1}{m} \left(S(B^m E')_{\tau_3} +\eta_k\right)&=\frac{1}{m} u(\rho^{\ox m}, \frac{1}{k^{2.25}})+\frac{\eta_k}{m}.
\end{align}
Thus, the asymptotic rate of the above protocol is 
\begin{align}
\lim_{k \to \infty} \frac{1}{m} \left(S(B^m E')_{\tau_3} +\eta_k\right)&= \lim_{k \to \infty} \frac{1}{m} u(\rho^{\ox m}, \frac{1}{k^{2.25}})+\lim_{k \to \infty} \frac{\eta_k}{m} \nonumber\\
&=\lim_{m \to \infty} \frac{1}{m} u(\rho^{\ox m}, \frac{1}{m^9}).  
%&---->\lim_{m \to \infty} \lim_{\delta \to 0}\frac{1}{m} u(\rho^{\ox m}, \delta)
\end{align}

\medskip

For the converse bound of the unassisted case, Eq.~(\ref{eq: unassisted converse}) is reduced to
\begin{align}
    n\mathcal{Q}(n,\epsilon )&\geq S(M)_{\nu} \nonumber \\
    &= S(B^n W_B)_{\xi} \nonumber \\
    & \geq u(\rho^{\ox n}, \epsilon), \nonumber
\end{align}
where the second line is due to the decoding isometry.
The last line follows from Definition~\ref{def: a(rho,epsilon)}. 
We remind that the optimal qubit rate is defined as $  \lim_{\epsilon \to 0} \limsup_{n \to \infty} \inf_{\mathcal{C}_n,\mathcal{D}_n} \mathcal{Q}(n,\epsilon )$. So, the converse follows from dividing both sides by $n$ and taking the limit of $n \to \infty$ and $\epsilon \to 0$.
%
%Hence, we first take the limit of $n$ and then the limit of $\epsilon$.
%
\end{proof}

%\textcolor{red}{DL: The second line and third line should have $-n\delta(n,\epsilon)$ from Eq.~(\ref{eq: unassisted converse}).  I believe there is no constraint on $n$ and $\epsilon$ otherwise. After dividing by $n$, mathematically it is an inequality, and we can take the two limits in either order and the inequality still holds?  So, $\lim_{m \to \infty}$ $\lim_{\gamma \to 0} {1 \over m} u(\rho^{\otimes m},\gamma)$ is both upper and lower bound for the rate and thus the optimal rate.}

%\textcolor{blue}{I totally agree with this. What I do not understand is that for this to work, we would need a lemma like below. How can we prove such a lemma?
%\begin{lemma}
%    Let $\rho^{A_i}=\Tr_{[n] \backslash i}(\rho^{A^n})$ denote the reduced state on system $A_i$. If 
%    $\frac{1}{2}\norm{\rho^{A_i}-\sigma^{A_i}}_1 \leq \epsilon$ holds for all subsystems $1\leq i \leq n$, then
%    $\frac{1}{2}\norm{\rho^{A^n}-\sigma^{A^n}}_1 \leq n \epsilon$. 
%\end{lemma}}

%\begin{remark}

%for the special case of a pure state source $\ket{\rho}^{AR}$, we show that $\lim_{\epsilon \to 0} \lim_{m \to \infty} \frac{1}{m}u(\rho^{\ox m},\epsilon)=\lim_{n \to \infty} \frac{1}{n}u(\rho^{\ox n},0)$.
%\end{remark}

In general, it is not obvious if the converse and achievable rates of the above theorem are equal. Below, we provide examples for which the two rate are equal. 

%It is not obvious how the rates $\lim_{n \to \infty} \frac{1}{n}u(\rho^{\ox n},0)$ and $\lim_{\epsilon \to 0} \lim_{n \to \infty} \frac{1}{n}u(\rho^{\ox n},\epsilon)$  are related in general. Below, we provide examples for which $ \lim_{\epsilon \to 0} \lim_{n \to \infty} \frac{1}{n}u(\rho^{\ox n},\epsilon)=\lim_{n \to \infty}\frac{1}{n}u(\rho^{\ox n},0)$ holds.  

\begin{proposition}\label{prp: pure  input state}
If we assume the input state $\rho^{AR}$ is pure, as in \cite{Bennett2014a}, then   the  optimal unassisted simulation rate is $\lim_{n \to \infty}  \lim_{\epsilon \to 0}  \frac{1}{n}u(\rho^{\ox n},\epsilon)= \lim_{\epsilon \to 0} \lim_{n \to \infty} \frac{1}{n}u(\rho^{\ox n},\epsilon)=\lim_{n \to \infty} \frac{1}{n}u(\rho^{\ox n},0)=E_p^{\infty}(B:KR)_{\sigma}$.
%
%In other words, the  optimal unassisted rate is $\lim_{n \to \infty} \frac{1}{n}u(\rho^{\ox n},0)= \lim_{\epsilon \to 0} \lim_{n \to \infty} \frac{1}{n}u(\rho^{\ox n},\epsilon)=E^{\infty}(B:KR)_{\sigma}$
\end{proposition}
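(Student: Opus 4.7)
The plan is to identify $u(\rho^{\otimes n}, 0)$ with the entanglement of purification $E_p(B^n:K^nR^n)_{\sigma^{\otimes n}}$ when $\rho^{AR}$ is pure, and then to argue that both iterated limits in $n$ and $\epsilon$ collapse to the regularization $E_p^{\infty}(B:KR)_{\sigma}$. Combining with the bounds of Theorem~\ref{thm: unassisted} then pins down $Q^*$.

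First I would evaluate $u(\rho, 0)$ directly. Since $\rho^{AR}$ is pure, the register $R'$ in Definition~\ref{def: a(rho,epsilon)} is trivial, so $\ket{\tau_2}^{BKER}=(U_{\Lambda_2}\otimes \1_R)\ket{\rho}^{AR}$ purifies $\tau_2^{BKR}$ through the environment $E$. At $\gamma=0$ the fidelity constraint forces $\tau_2^{BKR}=\sigma^{BKR}$, so $E$ purifies $\sigma^{BKR}$. Applying the Stinespring dilation $U_{\Lambda_3}:E\to E'E''$ produces a pure state $\ket{\tau_3}^{BKE'E''R}$ which is again a purification of $\sigma^{BKR}$; since different choices of $\Lambda_2$ only rearrange the environment via an isometry that can be absorbed into $\Lambda_3$, varying $(\Lambda_2,\Lambda_3)$ sweeps out every possible split of the purifying ancilla into $E'$ and $E''$. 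Hence
\[
 u(\rho,0) \;=\; \min S(BE') \;=\; E_p(B:KR)_{\sigma},
\]
where the minimum is over all purifications of $\sigma^{BKR}$. The identical argument applied to $\rho^{\otimes n}$ gives $u(\rho^{\otimes n},0)=E_p(B^n:K^nR^n)_{\sigma^{\otimes n}}$, and subadditivity of $E_p$ together with Fekete's lemma yields $\lim_n \frac{1}{n} u(\rho^{\otimes n},0)=E_p^{\infty}(B:KR)_{\sigma}$.

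Next I would establish the two order-of-limits identities. For $\lim_n \lim_{\epsilon\to 0}$, it suffices to show continuity of $u$ in $\gamma$ at each fixed $n$: by Remark~\ref{remark: min vs inf}, the dimensions of $E$ and $E'$ are bounded, so the feasible set for $(\Lambda_2,\Lambda_3)$ is compact; extracting a convergent subsequence of optimizers as $\epsilon\to 0$ and using continuity of the von Neumann entropy yields $\lim_{\epsilon\to 0} u(\rho^{\otimes n},\epsilon)=u(\rho^{\otimes n},0)$. For the harder direction $\lim_{\epsilon\to 0}\lim_n$, I would invoke the asymptotic continuity of $E_p$: if $\|\tau_2^{(n)}-\sigma^{\otimes n}\|_1\leq \delta$ then one has a bound of the form $|E_p(B^n:K^nR^n)_{\tau_2^{(n)}}-E_p(B^n:K^nR^n)_{\sigma^{\otimes n}}|\leq O(\sqrt{\delta}\, n\log(|B||K||R|))+h(\delta)$, where the dimension factor grows linearly (rather than exponentially) in $n$. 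Fuchs--van de Graaf converts the fidelity constraint into $\delta\leq 2\sqrt{2\epsilon}$, independent of $n$, so after dividing by $n$ the error is $O(\epsilon^{1/4})$, which vanishes as $\epsilon\to 0$. Monotonicity of $u$ in $\gamma$ supplies the matching upper bound.

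Assembling everything, the sandwich $\lim_{\epsilon\to 0}\lim_m \frac{1}{m} u(\rho^{\otimes m},\epsilon)\leq Q^*\leq \lim_m \frac{1}{m} u(\rho^{\otimes m},1/m^9)\leq \lim_m \frac{1}{m} u(\rho^{\otimes m},0)$ from Theorem~\ref{thm: unassisted}, combined with the identifications above, forces all four quantities to equal $E_p^{\infty}(B:KR)_{\sigma}$. The main obstacle is the uniform-in-$n$ continuity estimate for $E_p$ needed to swap the $\epsilon$ and $n$ limits: the dimension factor in the continuity bound must scale per copy (as $\log(|B||K||R|)$) rather than per total dimension (as $n\log(|B||K||R|)$ times an extra $n$), so that division by $n$ leaves a quantity tending to zero as $\epsilon\to 0$. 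This is precisely the content of the asymptotic continuity theorem for the entanglement of purification, and its careful invocation is the delicate point of the argument.
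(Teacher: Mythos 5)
Your proposal is correct and follows essentially the same route as the paper: identify $u(\rho^{\otimes n},0)$ with $E_p(B^n\!:\!K^nR^n)_{\sigma^{\otimes n}}$ using the fact that for pure $\rho^{AR}$ the register $R'$ is trivial so the environment $E$ of $\Lambda_2$ purifies $\sigma^{BKR}$, invoke Fekete's lemma for the $n\to\infty$ limit, and then control the $\epsilon\to 0$ limit via a per-copy (linear-in-$n$) continuity estimate so that the error vanishes after dividing by $n$. The only cosmetic difference is that you package the continuity step as a citation to the asymptotic continuity of $E_p$, whereas the paper derives the same bound inline by applying Uhlmann's theorem to produce a purification of $\sigma^{\otimes n}$ close to $\tau_3$ and then invoking Fannes--Audenaert on the marginal $B^nE'$ (using the dimension bound on $E'$ from Remark~\ref{remark: min vs inf} to get the linear-in-$n$ scaling); these are the same underlying argument.
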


\begin{proof}
We remind that for a pure state $\ket{\rho}^{AR}$
 \begin{align}
        u((\proj{\rho}^{AR})^{\ox n}, \epsilon)&:= \min_{\substack{\>\>\>\Lambda_2:A \to BK\\ \Lambda_3:E  \to E'}} S(B^nE')_{\tau_3}
        \quad \quad \quad \>\> \> \text{s.t.} \quad F((\sigma^{BKR})^{\ox n},\tau_3^{B^nK^nR^n}) \geq 1-\epsilon,\nonumber
    \end{align}
where $\ket{\tau_3}^{B^nK^nE'E''R^n}=(U_{\Lambda_3} \ox \cid_{B^nK^nR^n})(U_{\Lambda_2} \ox \cid_{R^n})\ket{\rho}^{A^nR^n}$. Here, $U_{\Lambda_2}$ and $U_{\Lambda_3}$ are isometric extensions of ${\Lambda_2}$ and ${\Lambda_3}$, respectively.  
By definition $u((\proj{\rho}^{AR})^{\ox n}, 0) \geq u((\proj{\rho}^{AR})^{\ox n}, \epsilon)$ holds for $\epsilon \geq 0$.
For the other direction, we obtain
\begin{align}\label{eq: EoP lower bound}
     S(B^n E')_{\tau_3}  &\geq  S(B^n E')_{\sigma} +  \sqrt{2\epsilon} \log(|B^nE'|) +  h(\sqrt{2\epsilon}) \nonumber \\ 
     &\geq  u((\proj{\rho}^{AR})^{\ox n}, 0)+   \sqrt{2\epsilon} \log(|B^nE'|) +   h(\sqrt{2\epsilon}) \nonumber\\
    &=  E_p(B^n:K^nR^n)_{\sigma} +  \sqrt{2\epsilon} \log(|B^nE'|) +   h(\sqrt{2\epsilon})  
\end{align}
where in the first line the entropy is with respect to the state $\sigma^{B^nE'}=(\cM^{G^n \hookrightarrow E'}  \ox\cid_{B^nK^nR^n})\proj{\sigma}^{B^nK^n G^n R^n}$, that is $E'$ is obtained by applying a CPTP map acting on the environment system $G^n$ of the channel $\cN^{\ox n}$.
This line is due to Uhlmann's theorem \cite{UHLMANN1976}: 
the  state $\tau_3^{B^nK^nR^n}$ has 
$1-\epsilon$ fidelity with $\sigma^{B^nK^nR^n}$, hence, there is a purification  $V^{G^n \hookrightarrow E'E''}\ket{\sigma}^{B^nK^nG^nR^n}$ of the state $\sigma^{B^nK^nR^n}$, which has $1-\epsilon$ fidelity with  the purified state $\tau_3^{B^nK^nR^n E'E''}$. By tracing out system $E''$ the fidelity only increases. hence, the first line holds %due to the continuity of the entropy.
because $\tau_3$ is $2\sqrt{2\epsilon}$-close 
to $\sigma$ in trace norm; then the inequality follows 
by applying the Fannes-Audenaert inequality.
Finally the proposition  follows 
by dividing the above inequality by $n$ and taking the limit of $n\to \infty$ and $\epsilon \to 0$.
The $\epsilon$-terms  vanish because the dimension of system $E'$ is bounded as explained in Remark~\ref{remark: min vs inf}.
\end{proof}
\begin{remark}
Indeed for a general mixed input state $\rho^{AR}$, the optimal unassisted rate is lower bounded by $E_p^{\infty}(B:KR)_{\sigma}$. 
This lower bound can be obtained similarly to the lower bound we derive in Eq.~(\ref{eq: EoP lower bound}). 
However, this rate cannot be achievable in the general case since this entanglement of purification is the minimal possible entropy of systems $B^nE'$, where $E'$ is obtained by applying a CPTP map on system $E$, as defined in Definition~\ref{def: a(rho,epsilon)}, and system $R'$ which purifies the source $\rho^{AR}$, and it is inaccessible to the encoder.
%
%In the achievability proof of Theorem~\ref{thm: unassited}, we apply Schumacher compression to transmit systems $B^nE'$ to the decoder. 
\end{remark}

\begin{proposition}
 The optimal unassisted rate for the simulation of the identity channel
    $\cid:A \to A$ is equal to $ \lim_{\epsilon \to 0} \lim_{n \to \infty} \frac{1}{n}u(\rho^{\ox n},\epsilon)=\lim_{n \to \infty} \frac{1}{n}u(\rho^{\ox n},0)=S(CQ)_{\omega}$. 
%For $\cN=\cid$, this limit  $\lim_{\epsilon \to 0} u(\rho^{AR}, \epsilon)=S(CQ)_{\omega}$ holds. 
\end{proposition}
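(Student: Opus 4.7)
The plan is to compute $u(\rho^{AR},0)$ explicitly via the Koashi--Imoto (KI) decomposition, establish additivity on tensor powers, and then handle the $\epsilon\to 0$ limit with a continuity argument.

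By the KI theorem, any CPTP map $\Lambda_2:A\to A$ exactly preserving $\rho^{AR}$ has an isometric extension $U_{\Lambda_2}:A\to AE$ that acts as the identity on the classical register $C$ and the coherent register $Q$, and only applies an isometry $V_c:N\to NE$ on the $N$-block, with $\Tr_E[V_c\tau_c^NV_c^\dagger]=\tau_c^N$. Feeding the KI purification $\ket{\rho}^{ARR'}=\sum_c\sqrt{p_c}\ket{c}^C\ket{c}^{C'}\ket{\tau_c}^{NN'_\rho}\ket{\omega_c}^{QRR'_Q}$ through $U_{\Lambda_2}$, applying the Stinespring of $\Lambda_3$, and tracing the inaccessible systems would yield
\begin{equation*}
\tau_3^{CNQE'}=\sum_c p_c\proj{c}^C\ox\omega_c^Q\ox\chi_c^{NE'}, \qquad \Tr_{E'}\chi_c^{NE'}=\tau_c^N,
\end{equation*}
whose entropy equals $S(CQ)_\omega+\sum_c p_c\, S(\chi_c^{NE'})\geq S(CQ)_\omega$. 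I would then achieve equality by the explicit choice $V_c\ket{n}^N=\ket{\tau_c}^{NN'}\ket{n}^{\tilde N}$ together with $\Lambda_3=\Tr_{\tilde N}$, which makes $\chi_c^{NE'}=\proj{\tau_c}^{NN'}$ a pure purification of $\tau_c^N$, so that $S(\chi_c^{NE'})=0$. This establishes $u(\rho^{AR},0)=S(CQ)_\omega$. The same analysis applied to $\rho^{\ox n}$, whose KI decomposition is $\omega^{\ox n}$ with $S(C^nQ^n)_{\omega^{\ox n}}=n\, S(CQ)_\omega$, gives single-letter additivity $u(\rho^{\ox n},0)=n\, S(CQ)_\omega$ and hence the middle equality $\lim_n \frac{1}{n} u(\rho^{\ox n},0)=S(CQ)_\omega$.

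For the $\epsilon\to 0$ statement, the inequality $\lim_{\epsilon\to 0}\lim_n\frac{1}{n}u(\rho^{\ox n},\epsilon)\leq S(CQ)_\omega$ is immediate from monotonicity of $u$ in its second argument together with the previous step. For the matching lower bound I would combine a robust (approximate) KI theorem --- showing that any $\Lambda_2$ taking $\rho^{\ox n}$ to a state $\epsilon$-close to itself in fidelity is close to a map of the exact KI form --- with a Fannes--Audenaert estimate on $S(B^nE')$, using the dimension bound on $E'$ from Remark~\ref{remark: min vs inf}, to produce an inequality of the form $u(\rho^{\ox n},\epsilon)\geq n\, S(CQ)_\omega - \delta(n,\epsilon)$ with $\frac{1}{n}\delta(n,\epsilon)\to 0$ as $\epsilon\to 0$.

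The main obstacle is this last step: a naive Uhlmann plus Fannes argument on $B^nE'$ introduces an error term scaling as $n\sqrt{\epsilon}\log|A|$, which does not vanish in the iterated limit when $n\to\infty$ is taken before $\epsilon\to 0$. Resolving this would require either a dimension-controlled robust KI theorem (so that the deviation from the exact KI form is controlled by $\epsilon$ independently of $n$), or a typicality-based argument that first projects onto the high-probability $C^n$ sequences and then applies the single-letter KI analysis blockwise.
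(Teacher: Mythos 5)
Your $\epsilon=0$ analysis is essentially the paper's own: reduce to the Koashi--Imoto state $\omega^{CNQR}$, use the KI structure theorem to argue that any exactly-preserving $\Lambda_2$ acts as $\sum_c\proj{c}\ox U_c^N\ox\1^Q$, conclude $S(CNQE')_{\tau_3}=S(CQ)_\omega+\sum_c p_c\,S(\chi_c^{NE'})\geq S(CQ)_\omega$, and saturate by making $\chi_c^{NE'}$ pure. Likewise, the additivity $u(\rho^{\ox n},0)=n\,S(CQ)_\omega$ via the tensor-power KI decomposition matches the paper. So the middle equality and the upper bound on $\lim_{\epsilon\to0}\lim_n \frac1n u(\rho^{\ox n},\epsilon)$ are fine.

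The genuine gap is the one you flag: the lower bound $\lim_{\epsilon\to0}\lim_n\frac1n u(\rho^{\ox n},\epsilon)\geq S(CQ)_\omega$. You propose a ``robust KI theorem'' and correctly sense that a dimension-uncontrolled deviation estimate on the whole of $B^nE'$ could be a problem, but the diagnosis in your last paragraph is slightly off: the Fannes term $n\sqrt{\epsilon}\log|A|$ contributes $\sqrt{\epsilon}\log|A|$ per letter, which is $n$-independent and \emph{does} vanish under $\lim_{\epsilon\to0}\lim_{n\to\infty}$, so this by itself is not the obstruction. The real difficulty is ensuring that an $n$-copy map $\Lambda_2$ which only \emph{approximately} preserves $\rho^{\ox n}$ cannot correlate the environment with $Q^n$ beyond what $C^n$ accounts for. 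The paper does exactly what you speculate: it applies Fannes--Audenaert only to $S(C^nQ^n)_{\tau_3}$ (dimension $(|C||Q|)^n$, so a per-letter error $\sqrt{2\epsilon}\log(|C||Q|)$), and separately bounds the remaining term via the chain rule, writing $S(E'N^n|C^nQ^n)_{\tau_3}=-I(E'N^n:Q^n|C^n)_{\tau_3}+S(E'N^n)_{\tau_3}$, then invoking a quantity $J_\epsilon(\omega)$ from prior work (\cite{general_mixed_state_compression,ZK_mixed_state_ISIT_2020,ZBK_PhD}) satisfying $I(E'N^n:Q^n|C^n)_{\tau_3}\leq n\,J_\epsilon(\omega)$ with $J_\epsilon(\omega)\to 0$ as $\epsilon\to 0$ uniformly in $n$. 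This is precisely the ``dimension-controlled robust KI theorem'' you conjectured would be needed. Your proposal is therefore a valid skeleton that correctly locates the missing ingredient, but it is not a complete proof without that lemma; to close the gap you would need to either prove the $J_\epsilon$ bound or cite it explicitly.
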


\begin{proof}
As explained in Sec.~\ref{sec: Miscellaneous} below Theorem~\ref{thm: KI decomposition}, there are unitary  CPTP maps in both directions $U_{\KI}:A \to CNQ$ and $\cR:CNQ \to A$ which relate a state $\rho^{AR}$ to its KI-decomposition 
$\omega^{CNQR}$. This implies that 
$u((\rho^{AR})^{\ox n}, \epsilon)=u((\omega^{CNQR})^{\ox n}, \epsilon)$  since applying unitary CPTP maps do not change the entropy and only increases the fidelity.
Hence for  $\cN=\cid$ 
 \begin{align}
        u(({\rho}^{AR})^{\ox n}, \epsilon)&=u(({\omega}^{CNQR})^{\ox n}, \epsilon) \nonumber\\
        &= \min_{\substack{\>\>\> \>\>\> \>\>\>\Lambda_2:CNQ \to CNQ\\ \>\!\!\!\!\!\Lambda_3:E  \to E'}} S(C^nN^nQ^nE')_{\tau_3}
        \quad \quad \quad \>\> \> \text{s.t.} \quad F((\omega^{CNQR})^{\ox n},\tau_3^{C^n N^n Q^nR^n}) \geq 1-\epsilon,\nonumber
    \end{align}
where $\ket{\tau_3}^{C^n N^n Q^nE'E''R^nR'^n}=(U_{\Lambda_3} \ox \cid_{C^n N^n Q^nR^nR'^n})(U_{\Lambda_2} \ox \cid_{R^n}) (\ket{\omega}^{\otimes n})^{C^n N^n Q^nR^nR'^n}$. Here, $U_{\Lambda_2}$ and $U_{\Lambda_3}$ are isometric extensions of ${\Lambda_2}$ and ${\Lambda_3}$, respectively.  
We obtain
\begin{align}
S(C^n N^n Q^nE')_{\tau_3} 
&=S(C^n Q^n)_{\tau_3} +S(E'N^n|C^n  Q^n)_{\tau_3}  \nonumber\\
&\geq nS(CQ)_{\omega} +S(E'N^n|C^n  Q^n)_{\tau_3}
+   \sqrt{2\epsilon} n\log(|C|\cdot|Q|) +   h(\sqrt{2\epsilon}) \nonumber\\
&= nS(CQ)_{\omega} -I(E'N^n:Q^n|C^n  )_{\tau_3} +S(E'N^n)_{\tau_3}
+   \sqrt{2\epsilon} n\log(|C|\cdot|Q|) +   h(\sqrt{2\epsilon}) \nonumber\\
&\geq nS(CQ)_{\omega} -nJ_{\epsilon}(\omega) +S(E'N^n)_{\tau_3}
+   \sqrt{2\epsilon} n\log(|C|\cdot|Q|) +   h(\sqrt{2\epsilon}) \nonumber\\
&\geq nS(CQ)_{\omega} -nJ_{\epsilon}(\omega) 
+   \sqrt{2\epsilon} n\log(|C|\cdot|Q|) +   h(\sqrt{2\epsilon}) \nonumber\\
\end{align}
where the second line follows
because $\tau_3$ is $2\sqrt{2\epsilon}$-close 
to $\omega$ in trace distance; then the inequality follows 
by applying the Fannes-Audenaert inequality. 
The third line is to the definition of the conditional mutual information.
In the penultimate line $J_{\epsilon}(\omega)\to 0$ as $\epsilon \to 0$ as defined and proven in \cite{general_mixed_state_compression,ZK_mixed_state_ISIT_2020,ZBK_PhD}
.
By dividing by $n$ and taking the limits, we obtain 
\begin{align}
 \lim_{\epsilon \to 0} \lim_{n \to \infty} \frac{1}{n}u(\rho^{\ox n},\epsilon)\geq S(CQ)_{\omega}.
\end{align}
Also, we  show $u({\omega}^{CNQR} , 0) = S(CQ)_{\omega}$ as follows
 \begin{align}
      u({\omega}^{CNQR} , 0) 
        &= \min_{\substack{\>\>\> \>\>\> \>\>\>\Lambda_2:CNQ \to CNQ\\ \>\!\!\!\!\!\Lambda_3:E  \to E'}} S(C^nN^nQ^nE')_{\omega} \nonumber\\        
 &= S(CQ)_{\omega}   +S(NE'|C) \nonumber\\
  &\geq  S(CQ)_{\omega},    \nonumber
\end{align}
where the second line is by Theorem~\ref{thm: KI decomposition}, namely, the environment system of a CPTP map, which preserves the state, acts only on the redundant system. The inequality above can be saturated by choosing  $\Lambda_2$ as a map which traces out system $N$ and for given $c$ outputs a pure state $\proj{\omega_c}^{NE'}$
and by letting $\Lambda_3=\cid$.
Finally, by definition
\begin{align}
 u({\omega}^{CNQR} , 0) \geq \lim_{n \to \infty} \frac{1}{n}u(({\omega}^{CNQR})^{\ox n} ,0) \geq \lim_{\epsilon \to 0} \lim_{n \to \infty} \frac{1}{n}u(({\omega}^{CNQR})^{\ox n} ,  \epsilon),
\end{align}
and this completes the proof. 
\end{proof}

\begin{proposition}
For a fully classical input state  $\rho^{AR}=\sum_x p_x \proj{x}^A\ox \proj{x}^R$  the  optimal unassisted simulation rate is $\lim_{n \to \infty} \frac{1}{n}u(\rho^{\ox n},0)= \lim_{\epsilon \to 0} \lim_{n \to \infty} \frac{1}{n}u(\rho^{\ox n},\epsilon)=E_p^{\infty}(B:KR)_{\sigma}$.
\end{proposition}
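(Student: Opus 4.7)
The plan is to show that, for a classical input $\rho^{AR}$, the functional $u(\rho^{\ox n},0)$ collapses exactly to the $n$-letter entanglement of purification $E_p(B^n\!:\!K^nR^n)_{\sigma^{\ox n}}$, and then to handle the approximate case $\gamma=\epsilon>0$ by a continuity argument in the style of Proposition~\ref{prp: pure  input state}.

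I would first unfold the single-letter definition of $u(\rho,0)$. Using the canonical purification $\ket{\rho}^{ARR'}=\sum_x \sqrt{p_x}\ket{x}^A\ket{x}^R\ket{x}^{R'}$, the output of the Stinespring dilations $U_{\Lambda_2}:A\hookrightarrow BKE$ and $U_{\Lambda_3}:E\hookrightarrow E'E''$ takes the form
\begin{equation*}
\ket{\tau_3}^{BKE'E''RR'} \;=\; \sum_x \sqrt{p_x}\,\ket{x}^R\ket{x}^{R'}\ket{\psi_x}^{BKE'E''},
\end{equation*}
for some vectors $\ket{\psi_x}$ depending on $\Lambda_2,\Lambda_3$. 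The constraint $\tau_2^{BKR}=\sigma^{BKR}$ at $\gamma=0$ forces $\Tr_{E'E''}\proj{\psi_x}=\sigma_x:=\cN(\proj{x})$, and isometry of $U_{\Lambda_2}$ forces $\{\ket{\psi_x}\}_x$ to be orthonormal. A short calculation, using that $R$ and $R'$ are perfectly correlated in $\ket{\tau_3}$, then shows $\tau_3^{BE'}=\Phi^{BE'}$ where $\ket{\Phi}^{BKRE'E''}:=\sum_x\sqrt{p_x}\ket{x}^R\ket{\psi_x}^{BKE'E''}$ is itself a purification of $\sigma^{BKR}$. Conversely, the block-diagonal structure of $\sigma^{BKR}$ in the $R$-basis forces every purification to have precisely this shape for some orthonormal family $\{\ket{\psi_x}\}$ purifying $\{\sigma_x\}$, and any such family defines a legitimate isometry $U_{\Lambda_2}\ket{x}^A=\ket{\psi_x}$. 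The role of $\Lambda_3$ is then to range over all bipartitions $F=E'E''$ of the purifying system. Putting both directions together gives $u(\rho,0)=E_p(B\!:\!KR)_\sigma$, and tensoring with the $n$-th power yields $u(\rho^{\ox n},0)=E_p(B^n\!:\!K^nR^n)_{\sigma^{\ox n}}$ and hence $\lim_n\frac{1}{n}u(\rho^{\ox n},0)=E_p^\infty(B\!:\!KR)_\sigma$.

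For the case $\gamma=\epsilon>0$, I would mimic the Uhlmann plus Fannes--Audenaert argument of Proposition~\ref{prp: pure  input state}: given a feasible pair $(\Lambda_2,\Lambda_3)$ for $u(\rho^{\ox n},\epsilon)$, Uhlmann's theorem supplies a purification $\ket{\Phi}$ of $\sigma^{\ox n}$ on the same Hilbert space as $\ket{\tau_3}$ with $\tfrac12\|\tau_3^{B^nE'}-\Phi^{B^nE'}\|_1\leq\sqrt{2\epsilon}$, so
\begin{equation*}
S(B^nE')_{\tau_3}\;\geq\; E_p(B^n\!:\!K^nR^n)_{\sigma^{\ox n}}\;-\;\sqrt{2\epsilon}\,\log|B^nE'|\;-\;h(\sqrt{2\epsilon}).
\end{equation*}
Since $|E'|\leq(|A||B||K|)^n$ by Remark~\ref{remark: min vs inf}, dividing by $n$ and sending $n\to\infty$ then $\epsilon\to 0$ gives $\lim_{\epsilon\to 0}\lim_n\frac{1}{n}u(\rho^{\ox n},\epsilon)\geq E_p^\infty$; the reverse inequality is immediate from the monotonicity $u(\rho^{\ox n},\epsilon)\leq u(\rho^{\ox n},0)$ together with the $\gamma=0$ identification. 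Combining with Theorem~\ref{thm: unassisted} for achievability ($Q^*\leq\lim_n\frac{1}{n}u(\rho^{\ox n},n^{-9})\leq E_p^\infty$) and with the remark following Proposition~\ref{prp: pure  input state} for the converse ($Q^*\geq E_p^\infty$) closes the chain of equalities in the proposition.

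The main obstacle lies in the exact identification $u(\rho,0)=E_p(B\!:\!KR)_\sigma$. The step that makes every purification of $\sigma^{BKR}$ realizable by some $(\Lambda_2,\Lambda_3)$ depends crucially on the classical correlation between $A$ and $R$: it lets the encoder internally hold a copy $R'$ of $R$ and thereby pick an arbitrary orthonormal family of purifications of $\{\sigma_x\}$. The analogous correspondence fails for a genuinely quantum $\rho^{AR}$, which is the structural reason why Theorem~\ref{thm: unassisted} remains regularized and multi-letter in general.
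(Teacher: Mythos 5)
Your overall strategy mirrors the paper's: for a classical input, $R'$ is a copy of $A$, which lets you identify $u(\rho^{\ox n},0)$ with the $n$-fold entanglement of purification, and then handle $\gamma>0$ via Uhlmann plus Fannes--Audenaert. The paper works directly at the $n$-letter level and only sketches the identification $u(\rho^{\ox n},0)=E_p(B^n\!:\!K^nR^n)_{\sigma^{\ox n}}$; you are more explicit, proving the single-letter identity $u(\rho,0)=E_p(B\!:\!KR)_\sigma$ and tensorizing, which is a reasonable and slightly cleaner restructuring of the same argument.

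However, there is one false intermediate claim in your ``$\geq$'' direction. You assert that $\ket{\Phi}^{BKRE'E''}:=\sum_x\sqrt{p_x}\ket{x}^R\ket{\psi_x}^{BKE'E''}$ ``is itself a purification of $\sigma^{BKR}$.'' That requires $\Tr_{E'E''}\bigl(\ket{\psi_x}\!\bra{\psi_{x'}}\bigr)=0$ for $x\neq x'$, equivalently $\Lambda_2\bigl(\ket{x}\!\bra{x'}^A\bigr)=0$ for $x\neq x'$. The feasibility constraint at $\gamma=0$ only pins down $\Lambda_2(\proj{x})=\cN(\proj{x})$ on the diagonal; it says nothing about the off-diagonal blocks. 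For instance with $\cN=\cid$ and $\Lambda_2=\cid$ (so $E$ is trivial), one gets $\ket{\Phi}=\sum_x\sqrt{p_x}\ket{x}^R\ket{x}^{BK}$, a pure state, whereas $\sigma^{BKR}=\sum_x p_x\proj{x}^R\ox\cN(\proj{x})$ is mixed. The fix is immediate and makes the $\Phi$ detour unnecessary in this direction: $\ket{\tau_3}^{BKE'E''RR'}$ itself purifies $\sigma^{BKR}$ over the purifying system $E'E''R'$, so $S(BE')_{\tau_3}\geq E_p(B\!:\!KR)_\sigma$ follows directly from the bipartition $E'\,|\,E''R'$. Your identity $\tau_3^{BE'}=\Phi^{BE'}$ is still correct, and your ``$\leq$'' direction is fine as written, since there $\{\ket{\psi_x}\}$ is by construction read off a genuine purification of $\sigma^{BKR}$, for which the off-diagonal partial traces vanish automatically.
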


\begin{proof}
The proof is similar to the proof of Proposition~\ref{prp: pure  input state} as follows
%For a classical state ${\rho}^{AR}$ and $\ket{\tau_3}^{B^nK^nE'E''R^nR'^n}=(U_{\Lambda_3} \ox \cid_{B^nK^nR^nR'^n})(U_{\Lambda_2} \ox \cid_{R^nR'^n})\sum_{x^n} \sqrt{p_{x^n}}\ket{x^n}^{A^n} \ket{x^n}^{R^n} \ket{x^n}^{R'^n}$, and $U_{\Lambda_2}$ and $U_{\Lambda_3}$ are isometric extensions of ${\Lambda_2}$ and ${\Lambda_3}$, respectively.  
%
%By definition $u((\proj{\rho}^{AR})^{\ox n}, 0) \geq u((\proj{\rho}^{AR})^{\ox n}, \epsilon)$ holds for $\epsilon \geq 0$.
%
\begin{align}\label{eq: EoP lower bound 2}
     S(B^n E')_{\tau_3}  &\geq  S(B^n E')_{\sigma} +  \sqrt{2\epsilon} \log(|B^nE'|) +  h(\sqrt{2\epsilon}) \nonumber \\ 
     &\geq  u((\proj{\rho}^{AR})^{\ox n}, 0)+   \sqrt{2\epsilon} \log(|B^nE'|) +   h(\sqrt{2\epsilon}) \nonumber\\
    &=  E_p(B^n:K^nR^n)_{\sigma} +  \sqrt{2\epsilon} \log(|B^nE'|) +   h(\sqrt{2\epsilon})  
\end{align}
where in the first line the entropy is with respect to the state ${\sigma}^{B^nE'K^nR^nR'^n}=(\cM^{G^n R'^n \hookrightarrow E'}  \ox\cid_{B^nK^nR^n})\proj{\sigma}^{B^nK^n G^n R^n R'^n}$, and the inequality
is obtained by Uhlmann's theorem and Fannes-Audenaert inequality. 
The difference with Proposition~\ref{prp: pure  input state} is that the map $\cM^{G^n R'^n \hookrightarrow E'}$  acts on $R'^n$ as well. To obtain the second inequality note that $R'^n$ in the purified input state $\sum_{x^n} \sqrt{p_{x^n}}\ket{x^n}^{A^n} \ket{x^n}^{R^n} \ket{x^n}^{R'^n}$ is another copy of system $A^n$, hence, the desired  $E'$ system in Uhlmann's construction can be obtained by applying a map only on $A^n$. 
Finally the proposition  follows 
by dividing the above inequality by $n$ and taking the limit of $n\to \infty$ and $\epsilon \to 0$.
The $\epsilon$-terms  vanish because the  dimension of system $E'$ is bounded as explained in Remark~\ref{remark: min vs inf}.
\end{proof}

%\begin{theorem}\label{thm: unassited local F}
%Considering local fidelity of Eq.~\ref{eq: local fidelity}, in the unassited  simulation of the channel $\cN:A \to BK$, the optimal rate is $\lim_{\epsilon \to \infty}\lim_{n \to \infty} \frac{1}{n}u(\rho^{\ox n},\epsilon)$  where $u(\rho,\epsilon)$ is defined in Definition~\ref{def: a(rho,epsilon)}.
%\end{theorem}

%\begin{proof}
%The converse follows from the converse of Theorem~\ref{thm: unassited} since it is obtained by ...Not true ...
%\end{proof}

%%%%%%%%%%%%%%%%%%%%%%%%%%%%%%%%%%%%%%%%%%%%%%%%%%%%%%%%%%%%%%%%%%%%%%%%%%%%%%%%

%%%%%%%%%%%%%%%%%%%%%%%%%%%%%%%%%%%%%%%%%%%%%%%%%%%%%%%%%%%%%%%%%%%%%%%%%%%%%%%%%%%%%%%
\section{Discussion}\label{sec: discussion}

We consider an asymptotic i.i.d. simulation of an arbitrary channel $\cN:A \to BK$, with an isometric extension $U_{\cN}: A \to BKG$,
acting as $(\cN \ox \cid^R)\rho^{AR}$    %$\sigma^{BKR}=(\cN \ox \cid^R)\rho^{AR}$
on a general mixed input state shared with a reference system $R$.
An encoder, Alice, has access to $A^n$, and the goal is to simulate system $B^n$ at the decoder side, Bob, and system $K^n$ at her side.
This general definition captures various considerations of \cite{Bennett2014a} in a single model: classical, quantum, feedback, non-feedback.
The two extreme cases of the fully classical and fully quantum models in \cite{Bennett2014a} are realized by constraining 
$\rho^{AR}$ to be either a classical state or a pure quantum state, respectively.
The non-feedback and feedback models  are realized by  constraining 
$K=\emptyset$ or $K=G$ (the environment of an isometric extension), respectively.
We also recover the general mixed-state compression of \cite{general_mixed_state_compression,ZK_mixed_state_ISIT_2020} and the visible compression of mixed-state ensembles considered in \cite{Hayashi2006} by constraining $\cN$ to be an identity channel and the input $\rho^{AR}$ to be fully classical, respectively

We define two functionals in Definition~\ref{def: a(rho,epsilon)} and show that they characterize the simulation rates for all models, irrespective of constraints.
We prove that the optimal entanglement-assisted rate for the simulation of the channel     $\cN:A \to BK$ is equal to $a(\rho^{AR},0)$. This is a quantity  depends only on a single copy of the input state  $\rho^{AR}$ and channel $\cN$.
 For the unassisted  simulation, we prove a rate $\lim_{m \to \infty}  \frac{1}{m} u(\rho^{\ox m}, \frac{1}{m^9})$ is achievable, and the lower bound $ \lim_{\gamma \to 0} \lim_{m \to \infty} \frac{1}{m}u(\rho^{\ox m},\gamma)$ holds.  Even though we could not prove these two rates are equal in general, we provide multiple examples that these two bound match and can be simplified to single-letter  quantities. 

Several directions remain open for further exploration. One immediate avenue is to investigate whether the unassisted rate functional $u(\rho, \gamma)$ admits a tighter or more computable characterization, potentially leading to a deeper understanding of the regularization behavior.  Another promising direction is to consider local error instead of global error criterion or more generally a rate-distortion model.
Finally, an open question left by our work is determining the resource trade-off between shared randomness, shared entanglement, and quantum communication. In our model, we only consider shared entanglement and quantum communication; including shared randomness as an additional resource completes the picture from a resource-theoretic perspective.

\bigskip
\noindent \textbf{Acknowledgments.}
ZBK was supported by the Marie Sk{\l}odowska-Curie Actions (MSCA) European Postdoctoral Fellowships (Project 101068785-QUARC) 
and the Ada Lovelace Postdoctoral Fellowship at Perimeter Institute for Theoretical Physics.
DL received support from NSERC discovery grant and NSERC Alliance grant under the project QUORUM.
ZBK and DL, ia the Perimeter Institute, are supported in part by the
Government of Canada through the Department of Innovation, Science and Economic Development
and by the Province of Ontario through the Ministry of Colleges and Universities.

%%%%%%%%%%%%%%%%%%%%%%%%%%%%%%%%%%%%%%%%%%%%%%%%%%%%%%%%%%%%%%%%%%%%%%%%%%%%%%%%%%%

\appendix

\section{Proof of Lemma~\ref{lemma: decoupling}}
\label{sec: Proof of decoupling lemma}

\begin{proof}
From the fidelity criterion we obtain
\begin{align}\label{eq: operator norm xi A_1B1}
    1-\epsilon &\leq F\left(\sigma^{B^nK^nR^n}\otimes \proj{\Phi}^{A_1B_1} , \xi_n^{B^nK^nR^n A_1 B_1}\right) \nonumber \\
    & \leq F\left(\proj{\Phi}^{A_1B_1}, {\xi_{n}}^{A_1 B_1} \right) \nonumber \\
    &=\sqrt{\bra{\Phi} \xi_n^{A_1 B_1} \ket{\Phi}} \nonumber \\
    &\leq \sqrt{ \norm{\xi_n^{A_1 B_1}}_{\infty}},
\end{align}
where the second line is due to monotonicity of the fidelity under partial trace.
The last line follows from the definition of the operator norm.
Now, consider the Schmidt decomposition of the state 
$\ket{\xi_{n}}^{B^n K^n W_A W_B R^n {R'}^n A_1 B_1}$ with respect to the partition
$B^n K^n W_A W_B R^n {R'}^n: A_1 B_1 $, i.e.
\begin{align*}
\ket{\xi_{n}}^{B^n K^n W_A W_B R^n {R'}^n A_1 B_1}
     = \sum_{i} \sqrt{\lambda_i}\ket{v_i}^{B^n K^n W_A W_B R^n {R'}^n} \ket{w_i}^{A_1 B_1}. 
\end{align*}
Considering the above decomposition, we obtain
\begin{align}
  \label{eq:almost-pure}
   F&\left( \ketbra{\xi_{n}}^{B^n K^n W_A W_B R^n {R'}^n A_1 B_1},
                             {\xi_{n}}^{B^n K^n W_A W_B R^n {R'}^n } \ox {\xi_{n}}^{ A_1 B_1} \right) \nonumber\\
    &= \sqrt{\bra{\xi_{n}}{\xi_{n}}^{B^n K^n W_A W_B R^n {R'}^n } \ox {\xi_{n}}^{ A_1 B_1} 
                               \ket{\xi_{n}}}                                                                 \nonumber\\
    &= \sum_i \lambda_i^{\frac32} \nonumber\\
    & \geq \norm{\xi_n^{A_1 B_1}}_{\infty}^{\frac32} \nonumber\\
    &\geq (1-\epsilon)^3 
     \geq 1 - 3\epsilon,   
\end{align}
where the last line follows from Eq.~(\ref{eq: operator norm xi A_1B1}).
Finally, 
By the Alicki-Fannes inequality (Lemma \ref{AFW lemma}), this implies
\begin{align}
  \label{decoupling_I}
  I(B^n K^n W_A W_B R^n {R'}^n :A_1 B_1)_\xi
     &=    S(A_1 B_1)_\xi-S(A_1 B_1|B^n K^n W_A W_B R^n {R'}^n)_\xi \nonumber \\ 
     &\leq 2\sqrt{6\epsilon} \log(|A_1| |B_1|) + 2 h(\sqrt{6\epsilon}) \nonumber \\
     &= 4n\sqrt{6\epsilon} \log(d_1) + 2 h(\sqrt{6\epsilon}) 
           =: n \delta(n,\epsilon),  
\end{align}
where we assume that $|A_1|={d_1}^n$ for some $d_1>0$.
\end{proof}

\section{Proof of Lemma~\ref{lemma: f properties}}
\label{sec: Proof of Lemma f properties}

\begin{enumerate}
%%%%%%%%%%%%%%%%%% 1 %%%%%%%%%%%%%%%%%%%%
\item The definition of the function 
  directly implies that it is a non-decreasing function of $\epsilon$.

%%%%%%%%%%%%%%%%%% 2 %%%%%%%%%%%%%%%%%%%%
\item Let $U_1:A \hookrightarrow BKZ$ and 
  $U_2:A \hookrightarrow BKZ$ be the isometric extensions of the maps attaining the 
  minimum for $\gamma_1$ and $\gamma_2$, respectively, which act as 
  follows on the purified state $\ket{\rho}^{AR R'}$ 
 \begin{align*}
     \ket{\tau_1}^{BKZR R'}
        &=(U_1 \otimes \1_{R  R'}) \ket{\rho}^{ARR'}
    \text{ and } \\
    \ket{\tau_2}^{BKZRR'}
        &=(U_2 \otimes \1_{R  R'}) \ket{\rho}^{ARR'}. 
 \end{align*}
  For $0\leq \lambda \leq 1$, define the isometry 
  $U_0:A \hookrightarrow BKZ F F'$ which acts as 
  \begin{equation}
    \label{eq: isometry U in convexity}
    U_0 := \sqrt{\lambda} U_1 \otimes \ket{00}^{FF'} + \sqrt{1-\lambda} U_2 \otimes \ket{11}^{FF'},
  \end{equation}
  where systems $F$ and $F'$ are qubits, 
  and by applying $U_0$ we obtain
 \begin{align}
     (U_0  \otimes \1_{R  R'}) \ket{\rho}^{AR R'} 
      =\sqrt{\!\lambda}\ket{\tau_1}^{BKZR R'} \!\ket{00}^{\!FF'}
        + \sqrt{1-\lambda}\ket{\tau_2}^{BKZR R'\!} \ket{11}^{\!\!F\!F'},
 \end{align}
  and the reduced state on the systems 
  $BKRR'F$ is 
  \begin{align} \label{eq: tau in convexity proof}
    \tau^{BKRR'F} 
      =\lambda \tau_1^{BKRR'} \ox \proj{0}^F+ (1-\lambda) \tau_2^{BKRR'} \ox \proj{1}^F. 
  \end{align}  
  The fidelity for the state $\tau^{BKR}$ is bounded as follows:
  \begin{align}\label{eq:fidelity in convexity}
    F&(\sigma^{BK R} ,\tau^{BK R} )\nonumber \\
      &= F(\sigma^{BK R} ,\lambda \tau_1^{BK R}
        + (1-\lambda) \tau_2^{BKR}) \nonumber \\
      &\!= \!F(\!\lambda \sigma^{BKR}+\!\!(1\!\!-\!\!\lambda)\sigma^{BKR},
           \lambda  \tau_1^{BKR}
            +(1-\lambda) \!\tau_2^{BK R}) \nonumber\\
      &\geq \lambda F( \sigma^{BKR},\tau_1^{BK R})
            +(1-\lambda)F\!(  \sigma^{BK R},\tau_2^{BK R}\!) \nonumber\\
     &\geq 1-\left( \lambda\gamma_1 +(1-\lambda)\gamma_2 \right).
  \end{align}
  The first inequality is due to simultaneous concavity of the fidelity in both
  arguments;
  the last line follows by the definition of the isometries $U_1$ and $U_2$.
  Thus, the isometry $U_0$ yields a fidelity of at least 
  $1-\left( \lambda\gamma_1 +(1-\lambda)\gamma_2 \right) =: 1-\gamma$.
  Let $Z_0=Z FF'$ denote the environment of the isometry $U_0$ defined above. 
  We can obtain
  \begin{align}
   2 a(\rho,\epsilon)
                     &\leq I(B:RR')_{\tau} \nonumber\\ 
                     &\leq I(BF:RR')_{\tau} \nonumber\\ 
                     &= I(F:RR')_{\tau}+I(B:RR'|F)_{\tau} \nonumber\\ 
                     &= I(B:RR'|F)_{\tau} \nonumber\\ 
                     &=\lambda I(B:RR')_{\tau_1}+(1-\lambda)I(B:RR')_{\tau_2} \nonumber\\
                     &= \lambda a(\rho,\gamma_1)+(1-\lambda)a(\rho,\gamma_2),
  \end{align}
  where the quantum mutual information is with respect to the  state $\tau$ in Eq.~(\ref{eq: tau in convexity proof}).
The second line is due to the data processing inequality.  The fourth line holds because systems $RR'$ are independent from $F$.

%%%%%%%%%%%%%%%%%% 3 %%%%%%%%%%%%%%%%%%%%
\item We  prove 
  $a(\rho_1^{A_1R_1} \ox \rho_2^{A_2R_2}, \gamma) \geq  a(\rho_1^{A_1R_1}, \gamma)+a(\rho_2^{A_2R_2}, \gamma)$.
    \begin{align}
        a(\rho_1^{A_1R_1} \ox \rho_2^{A_2R_2}, \epsilon)&:= \min_{\Lambda:A_1 A_2\to B_1K_1 B_2K_2} \frac{1}{2} I(B_1B_2:R_1R'_1R_2R'_2)_{\tau} 
        \quad \text{s.t.} \nonumber\\
         &\quad \quad\quad\quad\quad\quad\quad\quad\quad\quad\quad F(\sigma_1^{B_1K_1R_1}\ox \sigma_2^{B_2K_2R_2},\tau^{B_1K_1R_1 B_2K_2R_2}) \geq 1-\gamma,\nonumber
    \end{align}
   where the quantum mutual information is with respect to the state
   \begin{align}\label{eq:U0-action}
       \ket{\tau}^{B_1 B_2 K_1 K_2 Z R_1R'_1R_2 R'_2}=(U_0 \ox \1_{R_1R'_1R_2 R'_2})(\ket{\rho_1}^{A_1  R_1R'_1} \ox \ket{\rho_2}^{A_2  R_2R'_2})
   \end{align}
  and the isometry 
  $U_0:A_1 A_2 \hookrightarrow B_1 B_2 K_1 K_2 Z$
  is the Stinespring dilation of the map attaining the minimum, and $Z$ is the environment system. The isometry  acts on the  purified source states with purifying 
  systems $R'_1$ and $R'_2$.

  We can define an isometry 
  $U_1:A_1 \hookrightarrow B_1 K_1 Z_1$ 
  acting only on system $A_1$, by letting
  $U_1 = (U_0 \otimes \1_{R_1R'_1R_2 R'_2})(\1_{R_1R'_1} \otimes \ket{\rho_2}^{A_2  R_2R'_2})$
  and with the environment $Z_1 := B_2 K_2 Z R_2 R'_2$.
 The state
  $\ket{\tau}^{B_1  K_1 Z_1 R_1R'_1R_2 R'_2}$ 
  has the same reduced state on $B_1 K_1 R_1$ as $\tau$ from
  Eq. (\ref{eq:U0-action}).
  This isometry preserves the fidelity for $\omega_1$, which follows from monotonicity 
  of the fidelity under partial trace:
  \begin{align}
        F&(\sigma_1^{B_1K_1R_1},\tau^{B_1K_1R_1}) \nonumber \\
        & \geq F(\sigma_1^{B_1K_1R_1}\ox \sigma_2^{B_2K_2R_2},\tau^{B_1K_1R_1 B_2K_2R_2}) \nonumber \\
        &\geq 1-\gamma,\nonumber
    \end{align}
  Similarly, we define the isometry 
  $U_2:A_2\hookrightarrow B_1 B_2 K_1 K_2 Z R_1R'_1$
  with output system $B_2 K_2$ and 
  environment $Z_2:=B_1  K_1  Z R_1R'_1$, and the following holds
  \begin{align}
        F&(\sigma_2^{B_2K_2R_2},\tau^{B_2K_2R_2}) \nonumber \\
        & \geq F(\sigma_1^{B_1K_1R_1}\ox \sigma_2^{B_2K_2R_2},\tau^{B_1K_1R_1 B_2K_2R_2}) \nonumber \\
        &\geq 1-\gamma,\nonumber
    \end{align}
  By the above definitions, we obtain
  \begin{align}
     2a(\rho_1 \ox \rho_2,\gamma) &= I(B_1B_2:R_1R'_1 R_2R'_2)_{\tau} \\
     &\geq  I(B_1:R_1R'_1 )_{\tau}+I(B_2:R_2R'_2)_{\tau} \\
     & \geq 2a(\rho_1 ,\gamma) +2a(\rho_2,\gamma).
  \end{align}
where the second line is due to Lemma~\ref{lem:superadditivity-orig}.
The last line follows from the definitions of $a(\rho_1 ,\gamma) $ and $a(\rho_2,\gamma)$.

%%%%%%%%%%%%%%%%%% 4 %%%%%%%%%%%%%%%%%%%%
\item The function is 
% DL: do we need this?  
% non-increasing and 
  convex for $\gamma \geq 0 $, so it is continuous 
  for $\gamma > 0$. 
  Furthermore, since the function is non-increasing, the convexity implies that it is lower semi-continuous at 
  $\gamma=0$. On the other hand, since the fidelity and the quantum
  mutual information are all continuous functions of CPTP maps, and the domain of the optimization
  is a compact set, the optimum is attained \cite[Thms.~10.1 and 10.2]{Rockafellar1997}, 
  so, the function is also upper 
  semi-continuous at $\gamma=0$.  Combining the two observation, the function is continuous at $\gamma=0$.

%%%%%%%%%%%%%%%%%% 5 %%%%%%%%%%%%%%%%%%%%

\hfill\qedsymbol
\end{enumerate}

\section{Miscellaneous Lemmas and Facts }
\label{sec: Miscellaneous}

\begin{lemma}[{Alicki-Fannes~\cite{Alicki2004}; Winter~\cite{Winter2016}}]
\label{AFW lemma}
Let $\rho$ and $\sigma$ be two states on a bipartite Hilbert space 
$A\otimes B$ with trace distance $\frac12\|\rho-\sigma\|_1 \leq \epsilon$, then
\begin{align*}
%  \phantom{==================}
  |S(A|B)_{\rho}-S(A|B)_{\sigma}| \leq 2\epsilon \log |A| + (1+\epsilon)h(\frac{\epsilon}{1+\epsilon}).
%  \phantom{============}\blacksquare
\end{align*}
\end{lemma}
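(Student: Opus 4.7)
The plan is to follow the Alicki--Fannes--Winter coupling technique: build two classical-quantum extensions of $\rho$ and $\sigma$ that share a common $AB$-marginal, and then play off the concavity of conditional entropy against an elementary mutual-information bound to extract the sharp binary-entropy correction.

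First, I set $\epsilon' := \frac{1}{2}\|\rho-\sigma\|_1 \leq \epsilon$ (the case $\epsilon' = 0$ being trivial). Using the Jordan decomposition I write $\rho - \sigma = \Delta_+ - \Delta_-$ with $\Delta_\pm \geq 0$, $\Delta_+ \Delta_- = 0$, and $\Tr\Delta_\pm = \epsilon'$, and define the states $\omega := \Delta_-/\epsilon'$ and $\tau := \Delta_+/\epsilon'$ on $A\otimes B$. The key identity is $\rho + \epsilon'\omega = \sigma + \epsilon'\tau = (1+\epsilon')\nu$ for a common state $\nu$. Next I attach a classical flag qubit $X$ and form
\begin{align*}
\hat\rho_{ABX} &:= \frac{1}{1+\epsilon'}\bigl(\rho\otimes\proj{0}^X + \epsilon'\omega\otimes\proj{1}^X\bigr), \\
\hat\sigma_{ABX} &:= \frac{1}{1+\epsilon'}\bigl(\sigma\otimes\proj{0}^X + \epsilon'\tau\otimes\proj{1}^X\bigr),
\end{align*}
so that $\Tr_X \hat\rho_{ABX} = \Tr_X \hat\sigma_{ABX} = \nu$.

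For these classical-quantum states the chain rule yields $S(A|BX)_{\hat\rho} = \frac{1}{1+\epsilon'} S(A|B)_\rho + \frac{\epsilon'}{1+\epsilon'} S(A|B)_\omega$, and similarly for $\hat\sigma$ with $\sigma,\tau$. Concavity of conditional entropy in the state gives $S(A|B)_\nu \geq S(A|BX)_{\hat\sigma}$, while the fact that $X$ is classical (so $S(X|AB) \geq 0$) together with subadditivity yields $I(A:X|B)_{\hat\rho} \leq S(X|B)_{\hat\rho} \leq S(X)_{\hat\rho} = h\!\bigl(\epsilon'/(1+\epsilon')\bigr)$, hence $S(A|B)_\nu \leq S(A|BX)_{\hat\rho} + h\!\bigl(\epsilon'/(1+\epsilon')\bigr)$. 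Chaining these two inequalities and multiplying through by $1+\epsilon'$, I obtain
\begin{align*}
S(A|B)_\sigma - S(A|B)_\rho \leq \epsilon'\bigl(S(A|B)_\omega - S(A|B)_\tau\bigr) + (1+\epsilon')\,h\!\bigl(\epsilon'/(1+\epsilon')\bigr).
\end{align*}

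Using the universal bound $|S(A|B)_\zeta| \leq \log|A|$ for any state $\zeta$, the bracket is at most $2\log|A|$; swapping the roles of $(\rho,\omega)\leftrightarrow(\sigma,\tau)$ produces the matching reverse inequality. Finally, since the function $t \mapsto (1+t)\,h\!\bigl(t/(1+t)\bigr)$ is monotone non-decreasing on $[0,1]$ (its derivative is $\log((1+t)/t) \geq 0$), I may replace $\epsilon'$ by the larger $\epsilon$ throughout. The step I expect to require the most care is justifying the sharp estimate $I(A:X|B) \leq S(X)$ for the constructed classical-quantum extension — this is what upgrades a naive $O(\log|X|)$ slack into the binary-entropy correction $(1+\epsilon)\,h(\epsilon/(1+\epsilon))$ and thus distinguishes the Winter refinement from the original Alicki--Fannes bound.
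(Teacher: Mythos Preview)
Your argument is correct and is precisely Winter's refinement of the Alicki--Fannes bound: the Jordan decomposition $\rho-\sigma=\Delta_+-\Delta_-$, the interpolating state $\nu=(\rho+\epsilon'\omega)/(1+\epsilon')=(\sigma+\epsilon'\tau)/(1+\epsilon')$, the flag extension, and the estimate $I(A:X|B)\leq S(X)=h(\epsilon'/(1+\epsilon'))$ are exactly the ingredients that yield the stated constant $2\epsilon\log|A|$ and correction $(1+\epsilon)h(\epsilon/(1+\epsilon))$. The monotonicity step at the end is also handled cleanly.

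There is nothing to compare against, however: the paper does not supply its own proof of this lemma. It is listed under ``Miscellaneous Lemmas and Facts'' in the appendix purely as a quoted result, with attribution to \cite{Alicki2004} and \cite{Winter2016}, and is then invoked as a black box (for example in the proof of Lemma~\ref{lemma: decoupling}). So your write-up goes beyond what the paper itself provides; it reproduces the argument from the cited Winter reference rather than matching or diverging from any proof in the manuscript.
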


\noindent The quantum mutual information also satisfies a property called \textit{superadditivity}. 
\begin{lemma}[\cite{Datta2013b}]~\label{lem:superadditivity-orig}
    Let $\rho^{A_1R_1}$ and $\sigma^{A_2R_2}$ be pure quantum states on composite systems $A_1R_1$ and $A_2R_2$. 
    Let $\mathcal{N}^{A_1A_2\to B_1B_2}$ be a quantum channel, and 
    $\omega^{B_1B_2R_1R_2} := \mathcal{N}^{A_1A_2\to B_1B_2}(\rho^{A_1R_1} \otimes\sigma^{A_2R_2})$.
    Then, 
    \begin{align*}
        I(B_1B_2:R_1R_2)_{\omega} \geq I(B_1:R_1)_{\omega} + I(B_2:R_2)_{\omega}. 
    \end{align*}
\end{lemma}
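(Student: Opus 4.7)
My plan is to derive the superadditivity of mutual information from a single application of the chain rule combined with two standard facts: the data processing inequality (DPI) and strong subadditivity (SSA). The key structural observation I would exploit is that the channel $\mathcal{N}^{A_1A_2\to B_1B_2}$ acts as the identity on the reference systems and the input is a tensor product $\rho^{A_1R_1}\otimes\sigma^{A_2R_2}$. Consequently the marginal of $\omega$ on $R_1R_2$ is unchanged and still factorises, so $I(R_1:R_2)_\omega=0$, or equivalently $S(R_2\mid R_1)_\omega = S(R_2)_\omega$.

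With this in hand, I would expand via the chain rule
\begin{equation*}
I(B_1B_2:R_1R_2)_\omega = I(B_1B_2:R_1)_\omega + I(B_1B_2:R_2\mid R_1)_\omega
\end{equation*}
and bound each summand separately. For the first summand, DPI applied by discarding $B_2$ immediately yields $I(B_1B_2:R_1)_\omega \geq I(B_1:R_1)_\omega$. For the second summand, I would write it as $S(R_2\mid R_1)_\omega - S(R_2\mid R_1B_1B_2)_\omega$, substitute $S(R_2\mid R_1)_\omega=S(R_2)_\omega$ from the product structure of the references, and apply SSA in the ``conditioning reduces entropy'' form $S(R_2\mid R_1B_1B_2)_\omega \leq S(R_2\mid B_2)_\omega$. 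Combining these gives $I(B_1B_2:R_2\mid R_1)_\omega \geq S(R_2)_\omega - S(R_2\mid B_2)_\omega = I(B_2:R_2)_\omega$. Summing the two bounds delivers the lemma.

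The only structural fact that really needs to be checked, namely that $\omega^{R_1R_2}=\rho^{R_1}\otimes\sigma^{R_2}$, is immediate because $\mathcal{N}$ touches only $A_1A_2$, and so I do not anticipate any genuine obstacle in this proof. It is worth noting that purity of $\rho^{A_1R_1}$ and $\sigma^{A_2R_2}$ is not actually used in this argument; the inequality holds for any product input, with purity presumably being assumed only to match the setting in which the lemma is invoked in the main text.
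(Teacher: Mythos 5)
Your proof is correct. The paper itself does not prove this lemma but cites it to Datta et al.~\cite{Datta2013b}, so there is no in-paper argument to compare against; your self-contained derivation via the chain rule $I(B_1B_2:R_1R_2) = I(B_1B_2:R_1) + I(B_1B_2:R_2|R_1)$, data processing on the first term, and the combination of $\omega^{R_1R_2}=\rho^{R_1}\otimes\sigma^{R_2}$ with the conditioning-reduces-entropy form of SSA on the second term is sound at every step. Your closing observation is also right: purity of the inputs is never invoked, so the inequality holds for arbitrary product inputs; the lemma statement presumably carries the purity hypothesis only because that is the regime in which it is applied (in the proof of subadditivity of $a(\rho,\gamma)$), not because the inequality requires it.
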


\noindent 
We apply quantum state redistribution~\cite{Devetak2008a,Yard2009} as subprotocol to
construct our direct (achievability) proofs,
which can be summarized as follows.  
\begin{theorem}[Quantum state redistribution~\cite{Devetak2008a,Yard2009}]\label{thm: QSR}
Consider an arbitrary tripartite state on $ACB$, with purification
$|\psi\rangle^{ACBR}$.  Consider $n$ copies of the state for large
$n$, on systems $A_1, \cdots A_n, C_1, \cdots C_n, B_1, \cdots B_n,
R_1, \cdots R_n$.  Suppose initially Alice has systems $A_1, \cdots
A_n, C_1, \cdots C_n$, and Bob has systems $B_1, \cdots B_n$.  Then,
there is a protocol transmitting $Q=n(\frac{1}{2} I(C:R|B) + \eta_n)$
qubits from Alice to Bob, and consuming $nE$ ebits shared between them, 
where $Q+E = S(C|B)$, so that the final state is
$\epsilon_n$-close to $(|\psi\rangle^{ACBR})^{\otimes n}$ but $C_1,
\cdots C_n$ is transmitted from Alice to Bob, and such that
$\{\eta_n\}$, $\{\epsilon_n\}$ are vanishing non-negative sequences.
\end{theorem}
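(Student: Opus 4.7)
My plan is to prove this redistribution theorem by the decoupling-plus-Uhlmann strategy popularized by Abeyesinghe--Devetak--Hayden--Winter and refined by Devetak and Yard. Given the tripartite pure state $|\psi\rangle^{ACBR}$ and $n$ i.i.d.\ copies, I would equip Alice and Bob with a maximally entangled resource state $|\Phi_0\rangle^{A_0B_0}$ whose dimension is part of the design, let Alice apply an isometry $V:C^nA_0\to MA_1$ (leaving $A^n$ untouched), and let her send $M$ to Bob, who then applies a decoding isometry $D:MB_0B^n\to B^nC^nB_1$. The target final pure state is $(|\psi\rangle^{ACBR})^{\otimes n}\otimes|\Phi\rangle^{A_1B_1}$, and we need to show that suitable $V$ and $D$ exist for $Q=\tfrac12 I(C{:}R|B)+o(1)$ qubits with $Q+E=S(C|B)$.

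The first step recasts Bob's decoding task as a decoupling requirement on Alice's side. Tracing the target pure state down to the Alice + reference cut $A^nA_1R^n$ yields $(\rho^{AR})^{\otimes n}\otimes\tau_{A_1}$, with $\tau_{A_1}$ maximally mixed. By Uhlmann's theorem, whenever the actual reduced state on $A^nA_1R^n$ after Alice's encoding is $\epsilon$-close to this target in trace norm, a decoding isometry on Bob's side exists that brings the global state within $O(\sqrt{\epsilon})$ of the target. It therefore suffices to design $V$ so that $A_1$ is maximally mixed and decoupled from $A^nR^n$.

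I would choose $V$ to be a Haar-random unitary on $C^nA_0$, preceded by a projection onto the $\delta$-typical subspace of $C^n$, followed by a fixed bipartition of the output into the sent register $M$ (dimension $2^{nQ}$) and the residual $A_1$. Applying the Dupuis--Berta--Wullschleger--Renner one-shot decoupling theorem to the source state $\rho_{CAR}^{\otimes n}\otimes\tau_{A_0}$ with decoupling reference $A^nR^n$ gives an expected trace-distance bound that vanishes once a conditional collision-entropy threshold is met. The asymptotic equipartition property converts collision entropies into von Neumann entropies up to $o(n)$ slack, and the purity of $|\psi\rangle^{ACBR}$ is then used to rewrite the threshold in $B$-conditioned form via the identity $S(C|AR)=-S(C|B)$, yielding $Q\ge\tfrac12 I(C{:}R|B)_\psi+o(1)$. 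The dimension identity $\log|M|+\log|A_1|=\log|C^n|+\log|A_0|$ of the encoding isometry then fixes the entanglement balance $Q+E=S(C|B)_\psi$ with $nE:=\log|A_0|-\log|A_1|$. A standard derandomization argument picks a concrete unitary matching the Haar-averaged performance, and the resulting $\eta_n,\epsilon_n$ absorb the typicality and decoupling slacks.

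The main obstacle I anticipate is ensuring that the shared entanglement $A_0$ is actually exploited to bring the decoupling threshold down to $\tfrac12 I(C{:}R|B)$ rather than the larger quantity $\tfrac12 I(C{:}AR)$ that a naive unstructured Haar scramble on $C^nA_0$ would produce, since $\log|A_0|$ cancels in the direct collision-entropy bookkeeping. The resolution, implicit in the FQSW and mother-protocol derivations, is to couple the random unitary to $A_0$ in a structured way --- for instance a preliminary coherent teleportation step that uses $A_0$ to move a carefully chosen piece of $C^n$ before Haar-scrambling the residual --- so that the effective reference in the decoupling calculation collapses to $R^n$ conditioned on $B^n$. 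A secondary technical point is passing cleanly from the one-shot conditional collision-entropy bound to the claimed von Neumann expression while keeping a single vanishing error sequence $\epsilon_n$; this is handled by the standard two-step argument of typical-subspace projection followed by decoupling on the projected state.
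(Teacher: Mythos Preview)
The paper does not prove this theorem at all: it is stated in the appendix under ``Miscellaneous Lemmas and Facts'' as a known result imported from \cite{Devetak2008a,Yard2009} and used as a black box in the achievability arguments. There is therefore no proof in the paper to compare your proposal against.

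Your sketch follows the decoupling-plus-Uhlmann route to QSR that appears in the literature (e.g., the mother-protocol/FQSW derivations and subsequent refinements), and the overall architecture---reduce decoding to a decoupling condition on $A^nA_1R^n$, randomize $C^nA_0$, invoke a one-shot decoupling bound, and pass to von Neumann entropies via AEP---is sound in outline. You correctly flag the real difficulty: an unstructured Haar scramble on $C^nA_0$ yields a decoupling threshold of $\tfrac12 I(C{:}AR)$ rather than $\tfrac12 I(C{:}R|B)$, and the entanglement register $A_0$ does not help by dimension counting alone. Your proposed fix (``a preliminary coherent teleportation step \ldots so that the effective reference collapses to $R^n$ conditioned on $B^n$'') is where the actual content of the QSR proof lives, and as written it is a placeholder rather than an argument. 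The known decoupling proofs of redistribution make this step precise either by running two merging/FQSW steps in sequence (transfer $C$ to the reference, then merge back using Bob's side information $B$) or by a direct conditional-decoupling lemma; either way, the mechanism by which $B$ enters as a conditioning system must be spelled out, and that is missing here. So the proposal is a correct high-level plan with the key technical lemma left as an acknowledged gap---acceptable as a roadmap, but not yet a proof.
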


%\begin{lemma}[\cite{Dam2002}] \label{lemma: dam and Hayden}
%For $\alpha \in (\frac{1}{2},1)$ and any two states $\rho$ and $\sigma$ on a finite-dimensional quantum system the following holds:
%\begin{align*}
%    S_{\alpha}(\rho) \geq S_{\beta}(\sigma)+\frac{2\alpha}{1-\alpha}F(\rho,\sigma),
%\end{align*}
%where $\beta=\frac{\alpha}{2 \alpha -1}$.
%\end{lemma}

The properties of Koashi-Imoto decomposition are stated in the following theorem.
\begin{theorem}[\cite{Koashi2002,Hayden2004}]
\label{thm: KI decomposition}
Associated to the state $\rho^{AR}$, there are Hilbert spaces $C$, $N$ and $Q$
and an isometry $U_{\KI}:A \hookrightarrow C N Q$ such that:
\begin{enumerate}
  \item The state $\rho^{AR}$ is transformed by $U_{\KI}$ as
    \begin{align}
        \label{eq:KI state}
      (U_{\KI}\!\otimes\! \1_R)\rho^{AR} (U_{\KI}^{\dagger}\!\otimes\! \1_R)
        &\!\!= \!\!\sum_c p_c \proj{c}^{C}\!\! \otimes \omega_c^{N} \otimes \rho_c^{Q R} \nonumber\\
        &=:\omega^{C N Q R},
    \end{align}
    where the set of vectors $\{ \ket{c}^{C}\}$ form an orthonormal basis for Hilbert space 
    $C$, and $p_c$ is a probability distribution over $c$. The states $\omega_c^{N}$ and 
    $\rho_c^{Q R}$ act  on the Hilbert spaces $N$ and $Q \otimes R$, respectively.

  \item For any CPTP map $\Lambda$ acting on system $A$ which leaves the state $\rho^{AR}$ 
    invariant, that is $(\Lambda \otimes \cid_R )\rho^{AR}=\rho^{AR}$, every associated 
    isometric extension $U: A\hookrightarrow A E$ of $\Lambda$ with the environment system 
    $E$ is of the following form
    \begin{equation}
      U = (U_{\KI}\otimes \1_E)^{\dagger}
            \left( \sum_c \proj{c}^{C} \otimes U_c^{N} \otimes \1_c^{Q} \right) U_{\KI},
    \end{equation}
    where the isometries $U_c:N \hookrightarrow N E$ satisfy 
    $\Tr_E [U_c \omega_c U_c^{\dagger}]=\omega_c$ for all $c$.
    The isometry $U_{KI}$ is unique (up to trivial change of basis of the Hilbert spaces 
    $C$, $N$ and $Q$). Henceforth, we call the isometry $U_{\KI}$ and the state 
    $\omega^{C N Q R}=\sum_c p_c \proj{c}^{C} \otimes \omega_c^{N} \otimes \rho_c^{Q R}$ 
    the Koashi-Imoto (KI) isometry and KI-decomposition of the state $\rho^{AR}$, respectively. 

  \item In the particular case of a tripartite system $CNQ$ and a state $\omega^{CNQR}$ already 
    in Koashi-Imoto form (\ref{eq:KI state}), property 2 says the following:
    For any CPTP map $\Lambda$ acting on systems $CNQ$ with 
    $(\Lambda \otimes \cid_R )\omega^{CNQR}=\omega^{CNQR}$, every associated 
    isometric extension $U: CNQ\hookrightarrow CNQ E$ of $\Lambda$ with the environment system 
    $E$ is of the form
    \begin{equation}
      U = \sum_c \proj{c}^{C} \otimes U_c^{N} \otimes \1_c^{Q},
    \end{equation}
    where the isometries $U_c:N \hookrightarrow N E$ satisfy 
    $\Tr_E [U_c \omega_c U_c^{\dagger}]=\omega_c$ for all $c$.
\end{enumerate} 
\end{theorem}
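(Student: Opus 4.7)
The plan is to reduce the statement to the Artin--Wedderburn classification of finite-dimensional $*$-algebras. Let $\cF$ denote the convex set of CPTP maps $\Lambda:A\to A$ satisfying $(\Lambda\otimes\cid_R)\rho^{AR}=\rho^{AR}$, and let $\cM$ be the unital subalgebra of $\cB(\mathrm{supp}\,\rho^A)$ generated by the Kraus operators of all $\Lambda\in\cF$. I would first show that $\cM$ is a $*$-algebra. Multiplicative closure follows from $\cF$ being closed under composition, since Kraus operators of $\Lambda_1\circ\Lambda_2$ are products of Kraus operators of the factors. Closure under Hermitian conjugation follows from the Petz recovery map: for $\Lambda\in\cF$, the Petz adjoint $\Lambda^{\#}_{\rho^A}$ is again in $\cF$, and its Kraus operators are $(\rho^A)^{1/2}$-conjugated adjoints of those of $\Lambda$; combined with $\rho^A$ lying in $\cM'$ on its support, this forces the span to be closed under $*$.

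By Artin--Wedderburn, there is a unitary $V$ on $\mathrm{supp}\,\rho^A$ and a decomposition $\mathrm{supp}\,\rho^A\cong\bigoplus_c N_c\otimes Q_c$ such that $V\cM V^{\dagger}=\bigoplus_c\cB(N_c)\otimes\1_{Q_c}$ and $V\cM'V^{\dagger}=\bigoplus_c\1_{N_c}\otimes\cB(Q_c)$. Padding the label $c$ and the block dimensions to uniform values promotes $V$ to the desired isometry $U_{\KI}:A\hookrightarrow C\otimes N\otimes Q$. To locate $\rho^{AR}$ inside this decomposition, I would construct an averaging map $\overline{\Lambda}\in\cF$ as a uniform convex combination of unitary-conjugation channels whose Kraus operators exhaust a unitary $*$-basis of each Wedderburn block; this yields a conditional expectation from $\cB(A)$ onto $\cM'$, and since $\overline{\Lambda}\in\cF$, invariance of $\rho^{AR}$ under $\overline{\Lambda}\otimes\cid_R$ forces $\rho^{AR}\in\cM'\otimes\cB(R)$. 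Expanding this in the $V$-basis and using that the trace-preservation condition on $\cF$ pins down a unique $N_c$-state consistent with the full family of preserving Kraus operators gives
\begin{equation*}
(U_{\KI}\otimes\1_R)\rho^{AR}(U_{\KI}^{\dagger}\otimes\1_R)=\sum_c p_c\proj{c}^C\otimes\omega_c^N\otimes\rho_c^{QR},
\end{equation*}
which establishes property~1.

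For property~2, every $\Lambda\in\cF$ has Kraus operators in $\cM$, hence block-diagonal in the label $c$, trivial on $Q_c$, and acting on $N_c$ as an $\omega_c^N$-preserving CPTP map. Lifting to a Stinespring dilation $U:A\hookrightarrow AE$ gives
\begin{equation*}
U=(U_{\KI}\otimes\1_E)^{\dagger}\Bigl(\sum_c\proj{c}^C\otimes U_c^N\otimes\1_c^Q\Bigr)U_{\KI},
\end{equation*}
with each $U_c:N\hookrightarrow NE$ an isometry satisfying $\Tr_E[U_c\omega_c U_c^{\dagger}]=\omega_c$, the latter encoding that the $N_c$-block of $\Lambda$ preserves $\omega_c^N$. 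Uniqueness of $U_{\KI}$ up to permutation of $c$ and block-wise unitary changes on $N_c,Q_c$ is inherited directly from uniqueness of the Wedderburn decomposition of $\cM$. Property~3 is the specialization to states already in KI form, where $U_{\KI}=\1$ and the block structure of $U$ is exactly the content of~2 without the outer conjugation.

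The main obstacle I anticipate is twofold. First, the $*$-closure of $\cM$ via Petz recovery is delicate when $\rho^A$ is not full rank: $(\rho^A)^{-1/2}$ only makes sense on the support, and one must verify that restricting to the support does not discard Kraus operators that would otherwise spoil the algebra structure. Second, the averaging map $\overline{\Lambda}$ must be genuinely constructed inside $\cF$ rather than merely as a positive projection onto $\cM'$; this requires exhibiting a finite unitary $*$-basis for each Wedderburn block whose conjugation channels individually preserve $\rho^{AR}$, which uses the $*$-algebra structure to promote unitaries in $\cM$ to bona fide state-preserving channels. Once these two ingredients are in hand, the identification $\rho^{AR}\in\cM'\otimes\cB(R)$ and its expansion into $(p_c,\omega_c^N,\rho_c^{QR})$ are direct consequences of the Wedderburn structure, and properties~2 and~3 fall out essentially by inspection.
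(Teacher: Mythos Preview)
The paper does not prove this theorem; it is quoted from \cite{Koashi2002,Hayden2004} without proof, so there is no argument in the paper to compare against. Your Artin--Wedderburn route is in the spirit of those references, and the algebra $\cM$ generated by the Kraus operators of all $\rho^{AR}$-preserving maps is the right object: its Wedderburn blocks do pick out the $C$, $N$, $Q$ structure.

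There is, however, a genuine gap in how you place $\rho^{AR}$ inside that structure. The assertion that invariance under your averaging map forces $\rho^{AR}\in\cM'\otimes\cB(R)$ is false, and the same counterexample shows that $\overline{\Lambda}$ cannot be built inside $\cF$. Take $\rho^{AR}=\rho^A\otimes\rho^R$ with $\rho^A$ full rank and not maximally mixed. The channel $X\mapsto\Tr(X)\,\rho^A$ lies in $\cF$, and its Kraus operators $\sqrt{\lambda_i}\,|i\rangle\langle j|$ span $\cB(A)$; hence $\cM=\cB(A)$ and $\cM'=\CC\,\1_A$. The KI decomposition here has $C$ and $Q$ trivial and $N=A$ with $\omega^N=\rho^A$, so $\rho^A\notin\cM'$. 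Correspondingly, the only unitaries $U\in\cM=\cB(A)$ whose conjugation preserves $\rho^{AR}$ are those commuting with $\rho^A$; when $\rho^A$ has simple spectrum these are all diagonal and cannot form a linear basis of $\cB(A)$, so your proposed mechanism for building $\overline{\Lambda}\in\cF$ fails exactly here. Note too that your $*$-closure argument already invokes $\rho^A\in\cM'$, which is the same false claim.

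What is missing is that $\rho^A$ only commutes with the center $\cZ(\cM)$ (giving block-diagonality in $c$), not with all of $\cM$; within each block the $N_c$-marginal $\omega_c$ is determined by the fixed-point equation for the block Kraus operators, not by membership in $\cM'$. The tensor factorisation $\omega_c^{N_c}\otimes\rho_c^{Q_cR}$ then requires a separate argument (for example via the intertwining relation $E_k(\rho^A)^{1/2}=(\rho^A)^{1/2}F_k$ that Petz's equality condition provides), and this is the step your outline does not supply.
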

The
sources $\rho^{AR}$ and $\omega^{C N Q R}$  are equivalent in the sense that there are the isometry 
$U_{\KI}$ and the reversal CPTP map $\cR: C N Q \longrightarrow A$, which reverses the 
action of the KI isometry, such that:
\begin{align}
    \omega^{C N Q R}&= (U_{\KI}\otimes \1_R)\rho^{AR} (U_{\KI}^{\dagger}\otimes \1_R), \nonumber \\
    \rho^{AR}&=(\cR \otimes \cid_R)\omega^{C N Q R} \nonumber \\
    &=(U_{\KI}^{\dagger }\otimes \1_R) \omega^{C N Q R} (U_{\KI}\otimes \1_R) \nonumber \Tr [(\1_{C N Q }-\Pi_{C N Q})\omega^{C N Q}]\frac{1}{|A|}\1,
\end{align}
where $\Pi_{C N Q}=U_{\KI}U_{\KI}^{\dagger}$ is the projection onto the subspace 
$U_{\KI}A \subset C \otimes N \otimes Q$. %, and $\sigma$ is an arbitrary state acting on $A\otimes R$.
We note that both these maps are unital.

%\clearpage

%%%%%%%%%%%%%%%%%%%%%%%%%%%%%%%%%%%%%%%%%%%%%%%%%%%%%%%%%%%%%%%%%%%%%%%%%%%%%%%%%%%%%%%
\bibliography{References}

\begin{thebibliography}{17}
\expandafter\ifx\csname natexlab\endcsname\relax\def\natexlab#1{#1}\fi
\expandafter\ifx\csname bibnamefont\endcsname\relax
  \def\bibnamefont#1{#1}\fi
\expandafter\ifx\csname bibfnamefont\endcsname\relax
  \def\bibfnamefont#1{#1}\fi
\expandafter\ifx\csname citenamefont\endcsname\relax
  \def\citenamefont#1{#1}\fi
\expandafter\ifx\csname url\endcsname\relax
  \def\url#1{\texttt{#1}}\fi
\expandafter\ifx\csname urlprefix\endcsname\relax\def\urlprefix{URL }\fi
\providecommand{\bibinfo}[2]{#2}
\providecommand{\eprint}[2][]{\url{#2}}

\bibitem[{\citenamefont{Bennett et~al.}(1999)\citenamefont{Bennett, Shor,
  Smolin, and Thapliyal}}]{Bennett1999}
\bibinfo{author}{\bibfnamefont{C.~H.} \bibnamefont{Bennett}},
  \bibinfo{author}{\bibfnamefont{P.~W.} \bibnamefont{Shor}},
  \bibinfo{author}{\bibfnamefont{J.~A.} \bibnamefont{Smolin}},
  \bibnamefont{and} \bibinfo{author}{\bibfnamefont{A.~V.}
  \bibnamefont{Thapliyal}}, \bibinfo{journal}{Phys. Rev. Lett.}
  \textbf{\bibinfo{volume}{83}}, \bibinfo{pages}{3081} (\bibinfo{year}{1999}),
  \urlprefix\url{https://link.aps.org/doi/10.1103/PhysRevLett.83.3081}.

\bibitem[{\citenamefont{Bennett et~al.}(2014)\citenamefont{Bennett, Devetak,
  Harrow, Shor, and Winter}}]{Bennett2014a}
\bibinfo{author}{\bibfnamefont{C.~H.} \bibnamefont{Bennett}},
  \bibinfo{author}{\bibfnamefont{I.}~\bibnamefont{Devetak}},
  \bibinfo{author}{\bibfnamefont{A.~W.} \bibnamefont{Harrow}},
  \bibinfo{author}{\bibfnamefont{P.~W.} \bibnamefont{Shor}}, \bibnamefont{and}
  \bibinfo{author}{\bibfnamefont{A.}~\bibnamefont{Winter}},
  \bibinfo{journal}{IEEE Trans. Inf. Theory} \textbf{\bibinfo{volume}{60}},
  \bibinfo{pages}{2926} (\bibinfo{year}{2014}).

\bibitem[{\citenamefont{Berta et~al.}(2011)\citenamefont{Berta, Christandl, and
  Renner}}]{BCR2011}
\bibinfo{author}{\bibfnamefont{M.}~\bibnamefont{Berta}},
  \bibinfo{author}{\bibfnamefont{M.}~\bibnamefont{Christandl}},
  \bibnamefont{and} \bibinfo{author}{\bibfnamefont{R.}~\bibnamefont{Renner}},
  \bibinfo{journal}{Commun. Math. Phys.} \textbf{\bibinfo{volume}{306}},
  \bibinfo{pages}{579} (\bibinfo{year}{2011}).

\bibitem[{\citenamefont{Fuchs and de~Graaf}(1999)}]{Fuchs1999}
\bibinfo{author}{\bibfnamefont{C.~A.} \bibnamefont{Fuchs}} \bibnamefont{and}
  \bibinfo{author}{\bibfnamefont{J.~v.} \bibnamefont{de~Graaf}},
  \bibinfo{journal}{IEEE Trans. Inf. Theory} \textbf{\bibinfo{volume}{45}},
  \bibinfo{pages}{1216} (\bibinfo{year}{1999}).

\bibitem[{\citenamefont{B.~Khanian and
  Winter}(2022)}]{general_mixed_state_compression}
\bibinfo{author}{\bibfnamefont{Z.}~\bibnamefont{B.~Khanian}} \bibnamefont{and}
  \bibinfo{author}{\bibfnamefont{A.}~\bibnamefont{Winter}},
  \bibinfo{journal}{IEEE Trans. Inf. Theory} \textbf{\bibinfo{volume}{68}},
  \bibinfo{pages}{3130} (\bibinfo{year}{2022}),
  \urlprefix\url{https://arxiv.org/pdf/1912.08506}.

\bibitem[{\citenamefont{Uhlmann}(1976)}]{UHLMANN1976}
\bibinfo{author}{\bibfnamefont{A.}~\bibnamefont{Uhlmann}},
  \textbf{\bibinfo{volume}{9}}, \bibinfo{pages}{273} (\bibinfo{year}{1976}).

\bibitem[{\citenamefont{Khanian and Winter}(2020)}]{ZK_mixed_state_ISIT_2020}
\bibinfo{author}{\bibfnamefont{Z.~B.} \bibnamefont{Khanian}} \bibnamefont{and}
  \bibinfo{author}{\bibfnamefont{A.}~\bibnamefont{Winter}}, in
  \emph{\bibinfo{booktitle}{Proc. IEEE Int. Symp. Inf. Theory (ISIT)}}
  (\bibinfo{address}{Los Angeles, CA, USA}, \bibinfo{year}{2020}), pp.
  \bibinfo{pages}{1852--1857}.

\bibitem[{\citenamefont{B.~Khanian}(2020)}]{ZBK_PhD}
\bibinfo{author}{\bibfnamefont{Z.}~\bibnamefont{B.~Khanian}},
  \bibinfo{type}{{PhD} thesis}, \bibinfo{school}{Universitat Aut\`{o}noma de
  Barcelona, Department of Physics}, \bibinfo{address}{Spain}
  (\bibinfo{year}{2020}), \bibinfo{note}{arXiv:quant-ph/2012.14143}.

\bibitem[{\citenamefont{Hayashi}(2006)}]{Hayashi2006}
\bibinfo{author}{\bibfnamefont{M.}~\bibnamefont{Hayashi}},
  \bibinfo{journal}{Phys. Rev. A} \textbf{\bibinfo{volume}{73}},
  \bibinfo{pages}{060301} (\bibinfo{year}{2006}),
  \urlprefix\url{https://link.aps.org/doi/10.1103/PhysRevA.73.060301}.

\bibitem[{\citenamefont{Rockafellar}(1997)}]{Rockafellar1997}
\bibinfo{author}{\bibfnamefont{R.~T.} \bibnamefont{Rockafellar}},
  \emph{\bibinfo{title}{Convex Analysis}} (\bibinfo{publisher}{Princeton
  University Press}, \bibinfo{year}{1997}).

\bibitem[{\citenamefont{Alicki and Fannes}(2004)}]{Alicki2004}
\bibinfo{author}{\bibfnamefont{R.}~\bibnamefont{Alicki}} \bibnamefont{and}
  \bibinfo{author}{\bibfnamefont{M.}~\bibnamefont{Fannes}},
  \bibinfo{journal}{J. Phys. A: Math. Gen.} \textbf{\bibinfo{volume}{37}},
  \bibinfo{pages}{L55} (\bibinfo{year}{2004}).

\bibitem[{\citenamefont{Winter}(2016)}]{Winter2016}
\bibinfo{author}{\bibfnamefont{A.}~\bibnamefont{Winter}},
  \bibinfo{journal}{Commun. Math. Phys.} \textbf{\bibinfo{volume}{347}},
  \bibinfo{pages}{291} (\bibinfo{year}{2016}).

\bibitem[{\citenamefont{Datta et~al.}(2013)\citenamefont{Datta, Hsieh, and
  Wilde}}]{Datta2013b}
\bibinfo{author}{\bibfnamefont{N.}~\bibnamefont{Datta}},
  \bibinfo{author}{\bibfnamefont{M.-H.} \bibnamefont{Hsieh}}, \bibnamefont{and}
  \bibinfo{author}{\bibfnamefont{M.~M.} \bibnamefont{Wilde}},
  \bibinfo{journal}{IEEE Transactions on Information Theory}
  \textbf{\bibinfo{volume}{59}}, \bibinfo{pages}{615} (\bibinfo{year}{2013}).

\bibitem[{\citenamefont{Devetak and Yard}(2008)}]{Devetak2008a}
\bibinfo{author}{\bibfnamefont{I.}~\bibnamefont{Devetak}} \bibnamefont{and}
  \bibinfo{author}{\bibfnamefont{J.}~\bibnamefont{Yard}},
  \bibinfo{journal}{Phys. Rev. Lett.} \textbf{\bibinfo{volume}{100}},
  \bibinfo{pages}{230501} (\bibinfo{year}{2008}),
  \urlprefix\url{https://link.aps.org/doi/10.1103/PhysRevLett.100.230501}.

\bibitem[{\citenamefont{Yard and Devetak}(2009)}]{Yard2009}
\bibinfo{author}{\bibfnamefont{J.~T.} \bibnamefont{Yard}} \bibnamefont{and}
  \bibinfo{author}{\bibfnamefont{I.}~\bibnamefont{Devetak}},
  \bibinfo{journal}{IEEE Trans. Inf. Theory} \textbf{\bibinfo{volume}{55}},
  \bibinfo{pages}{5339} (\bibinfo{year}{2009}).

\bibitem[{\citenamefont{Koashi and Imoto}(2002)}]{Koashi2002}
\bibinfo{author}{\bibfnamefont{M.}~\bibnamefont{Koashi}} \bibnamefont{and}
  \bibinfo{author}{\bibfnamefont{N.}~\bibnamefont{Imoto}},
  \bibinfo{journal}{Phys. Rev. A} \textbf{\bibinfo{volume}{66}},
  \bibinfo{pages}{022318} (\bibinfo{year}{2002}),
  \urlprefix\url{https://link.aps.org/doi/10.1103/PhysRevA.66.022318}.

\bibitem[{\citenamefont{Hayden et~al.}(2004)\citenamefont{Hayden, Jozsa, Petz,
  and Winter}}]{Hayden2004}
\bibinfo{author}{\bibfnamefont{P.}~\bibnamefont{Hayden}},
  \bibinfo{author}{\bibfnamefont{R.}~\bibnamefont{Jozsa}},
  \bibinfo{author}{\bibfnamefont{D.}~\bibnamefont{Petz}}, \bibnamefont{and}
  \bibinfo{author}{\bibfnamefont{A.}~\bibnamefont{Winter}},
  \bibinfo{journal}{Commun. Math. Phys.} \textbf{\bibinfo{volume}{246}},
  \bibinfo{pages}{359} (\bibinfo{year}{2004}),
  \urlprefix\url{https://doi.org/10.1007/s00220-004-1049-z}.

\end{thebibliography}
\end{document}